\newcommand{\comment}[1]{}
\newcommand{\hf}{\hat{f}}
\newcommand{\transpose}{^{\textsf{T}}}
\newcommand{\ind}{\chi}
\newcommand{\targetf}{{f^*}}
\newcommand{\poly}{\operatorname{poly}}
\newcommand{\bN}{\mathbb{N}}        
\newcommand{\bR}{\mathbb{R}}
\newcommand{\reals}{{\mathbb{R}}}
\newcommand{\cF}{\mathcal{F}} 
\newcommand{\cP}{\mathcal{P}}
\newcommand{\cL}{\mathcal{P}} 
\newcommand{\cS}{\mathcal{S}}
\newcommand{\cSz}{{\mathcal{S}}_{0}}
\newcommand{\cSnz}{{\mathcal{S}}_{\neq 0}}
\newcommand{\cY}{\mathcal{Y}}
\newcommand{\cZ}{\mathcal{Z}}
\newcommand{\cU}{\mathcal{U}}
\renewcommand{\ell}{m}
\newcommand{\demandSet}[1]{\mathcal{D}^{#1}}
\newcommand{\eps}{\varepsilon}
\newcommand{\prob}[1]{\operatorname{Pr}\left[\,#1\,\right]}               
\newcommand{\probg}[2]{\operatorname{Pr}\left[\,#1 \:\mid\: #2\,\right]}  
\newcommand{\probover}[2]{\operatorname{Pr}_{#1}\left[\,#2 \,\right]}     
\newcommand{\expect}[1]{\operatorname{\bf E}\left[\,#1\,\right]}          
\newcommand{\cA}{\mathcal{A}}
\newcommand{\trainS}{{\cal S}}
\newcommand{\cT}{\mathcal{T}}
\newcommand{\abs}[1]{\lvert #1 \rvert}
\newcommand{\OR}{{\textsc{SUM}\ }}
\newcommand{\XOR}{{\textsc{MAX}\ }}
\newcommand{\XOS}{\textsc{XOS}}
\newcommand{\XOSLong}{\XOS} 
\newcommand{\OXS}{\textsc{OXS}}
\newcommand{\OXSLong}{\OXS} 
\newcommand{\GS}{\textsc{GS}\ }
\newcommand{\sgn}{{\operatorname{sgn}}}
\newcommand{\card}[1]{\abs{#1}}
\newtheorem{Theorem}{Theorem}
\newtheorem{Definition}{Definition}
\newtheorem{Lemma}{Lemma}
\newtheorem{Claim}{Claim}
\newtheorem{claim}{Claim}
\newenvironment{proofsketch}{{\em Proof Sketch:}}{\qed}
\newcommand{\Union}{\cup}
\newcommand{\set}[1]{\left \{ #1 \right \}}                     
\newcommand{\targetfpower}[1]{(\targetf(#1))^{p}}
\newcommand{\setst}[2]{\left\{\; #1 \,:\, #2 \;\right\}}        
\newcommand{\setfamily}{family}
\title{Learning Valuation Functions}
\author{Maria Florina Balcan\thanks{
Georgia Institute of Technology,
\texttt{\small ninamf@cc.gatech.edu}}
\qquad
Florin Constantin\thanks{
Georgia Institute of Technology,
\texttt{\small florin@cc.gatech.edu}}
\qquad
Satoru Iwata\thanks{
Kyoto University,
\texttt{\small iwata@kurims.kyoto-u.ac.jp}}
\qquad
Lei Wang\thanks{
Georgia Institute of Technology,
\texttt{\small leiwang2007@gatech.edu}}
}
\date{}
\begin{document}
\maketitle
\thispagestyle{empty}
\setcounter{page}{0}

\begin{abstract}
A core element of microeconomics and game theory is that consumers
have valuation functions over bundles of goods and that these
valuation functions drive their purchases.  In particular, the value
assigned to a bundle need not be the sum of values on the individual
items but rather is often a more complex function of how the items relate.
The literature considers a hierarchy of valuation classes that includes
subadditive, \XOS\ (i.e. fractionally subadditive), submodular, and \OXS\ valuations.  Typically it
is assumed that these valuations are known to the center or that they come from a
known distribution.
Two recent lines of work, by Goemans et al. (SODA 2009) and by Balcan and Harvey (STOC 2011),  have considered a more realistic setting in which  valuations are learned from data,
focusing specifically on submodular functions.

In this paper we consider the approximate learnability of valuation functions at all levels in the hierarchy.
We first study their learnability in the distributional learning (PAC-style) setting due to Balcan and Harvey (STOC 2011).
We provide nearly tight lower and upper bounds of $\tilde{\Theta}(n^{1/2})$ on
 the approximation factor for learning \XOS\ and subadditive valuations,
both important classes that are strictly more general than submodular valuations.
Interestingly, we show that the $\tilde{\Theta}(n^{1/2})$ lower bound can be circumvented for \XOS\ functions of polynomial complexity; we provide an algorithm for learning
the class of \XOS\ valuations
 with a representation of polynomial size to within an $O(n^{\epsilon})$ approximation factor in running time $n^{O(1/\eps)}$ for any $\epsilon > 0$.
 We also establish learnability and hardness results for
 subclasses of the class of submodular valuations,
 i.e. gross substitutes valuations and interesting subclasses of \OXS\ valuations.

In proving our results for the distributional learning setting,
we provide novel  structural results for all these classes of valuations.
We show the implications of these results for the learning everywhere with value queries model, considered by Goemans et al. (SODA 2009).

Finally, we also introduce a more realistic variation of these models for economic settings,
in which information on the value of a bundle $S$ of goods can only be inferred
based on whether $S$ is purchased or not at a specific price.
 We provide lower and upper bounds for learning both in the distributional setting and with value queries.
\end{abstract}

\newcommand{\thispaper}{{[\small this paper]}}
\newcommand{\folklore}{{[\small folklore]}}

\newpage

\section{Introduction}

A central problem in commerce is understanding one's customers.  Whether
for assigning prices to goods, for deciding how to bundle products, or
for estimating how much inventory to carry, it is critical for a
company to understand its customers' preferences.  In Economics and
Game Theory, these preferences are typically modeled as valuations, or monotone set
functions, over subsets of goods.
It is usually assumed that consumers' valuations are known in
advance to the company, or that they are drawn from a known distribution.
In practice, however, these valuations must be learned.
For example, given past data of customer purchases of different bundles, a retailer would like to estimate how much a
(typical) customer would be willing to pay for new packages of goods
that become available.
Companies may also conduct surveys querying customers about their valuations\footnote{See e.g. \url{http://bit.ly/ls774D} for an example of an airline asking customers for a ``reasonable'' price for in-flight Internet.
}.

 Motivated by such scenarios, in this paper we investigate the learnability
 of classes of functions commonly used to model
consumers' valuations.
 In particular, we focus on a wide class of valuations
expressing ``no complementarities'':
 the value of the union of two disjoint bundles
 is no more than the sum of the values on each bundle ---
we henceforth use
 the standard optimization terminology {\em subadditive valuations}.
We provide upper and lower bounds on the learnability of valuation classes
in a popular hierarchy~\cite{GulS99WalrasianEwGS,LehmannLN02CombinatorialAwDMU,book07,Sandholm99AlgorithmfOWDiCA},
with submodular functions (the only class with similar extant results~\cite{BalcanH10LearningSF,nick09}) halfway in the hierarchy:
$$ \textit{OXS} \subsetneq \textit{gross substitutes}\ \subsetneq \text{submodular} \text{ \small (the only related learnability results~\cite{BalcanH10LearningSF,nick09}) } \subsetneq
\textit{XOS} \subsetneq \textit{subadditive}$$

We analyze the learnability of these classes in the
natural PMAC model~\cite{BalcanH10LearningSF} 
for approximate distributional learning. 
In this model, a learning algorithm is given a collection $\trainS = \{S_{1}, \dots, S_{\ell}\}$
of polynomially many labeled examples drawn i.i.d.~from
some fixed, but unknown, distribution $D$ over points (sets) in $2^{[n]}$.
The points are labeled by a fixed, but unknown, target function $\targetf : 2^{[n]} \rightarrow \bR_+$.
The goal is to output  in polynomial time, with high probability,  a hypothesis function $f$
that is a good multiplicative   approximation for $\targetf$  over most sets with respect to $D$. More formally, we want:
$$
\probover{S_1,\ldots,S_{\ell} \sim D}{~
    \probover{S \sim D}{
        f(S) \leq \targetf(S) \leq \alpha f(S)
    } ~\geq~ 1-\epsilon
~} ~\geq~ 1-\delta
$$ for an algorithm that uses $\ell= \poly(n, \frac{1}{\eps},\frac{1}{\delta})$ samples and that runs in $\poly(n, \frac{1}{\eps},\frac{1}{\delta})$ time.
In contrast, the classical PAC model~\cite{Valiant:acm84} requires
predicting \emph{exactly} (i.e. $\alpha\!=\!1$) with high probability the values of $\targetf$
 over most sets with respect to $D$. 
Thus the PMAC model can be viewed as an approximation-algorithms
extension of the traditional PAC model.

Our main results in the PMAC-learning model are for superclasses of submodular valuations, namely
subadditive valuations and \XOS~\cite{DobzinskiNS05ApproximationAfCAwCFB,DobzinskiNS06TruthfulRMfCA,LehmannLN02CombinatorialAwDMU}
 (also known as fractionally subadditive~\cite{BhawalkarR11WelfareGfCAwIB,Feige06OnMaximizingWwUFaS}) valuations.
A \XOS\ valuation represents a set of alternatives (e.g. travel destinations), where
the valuation for subsets of  goods (e.g. attractions) within each alternative
is additive. The value of any set of goods, e.g. dining and skiing, is the highest value for these goods among all alternatives.
That is, an \XOS\ valuation is essentially a depth-two tree with a MAX root over
SUM trees with goods as leaves.
 \XOS\ valuations are intuitive and very expressive:
they can represent any submodular valuation~\cite{LehmannLN02CombinatorialAwDMU} and can approximate any valuation in the subadditive superclass to a
$O(\log n)$ factor~\cite{BhawalkarR11WelfareGfCAwIB,Dobz}.
We also consider subclasses of submodular functions in the hierarchy, namely gross substitutes~\cite{CramtonSS06CombinatorialA,GulS99WalrasianEwGS} and \OXS~\cite{BuchfuhrerSS10ComputationaIiCPP,DayR06AssignmentPaCA, DobzinskiNS06TruthfulRMfCA,Singer10BudgetFM}
functions. 
Gross substitutes valuations are characterized by the lack of pairwise synergies among items:
for example, if the value of each of three items is the same, then no pair can have a strictly higher value than the other two pairs.
Finally, the \OXS\ class includes valuations representable as the SUM of MAX of item values. All these classes include linear valuations.

We also analyze the model of approximate learning everywhere with value queries, due to  Goemans et al.~\cite{nick09}.
In this model, the learner can adaptively pick a sequence of sets $S_{1}, S_{2}, \dots$ and  query the values $\targetf(S_{1}), \targetf(S_{2}), \dots$.
Unlike the high confidence and high accuracy requirements of PMAC, this model requires approximately learning $\targetf(\cdot)$ with certainty on all $2^{n}$ sets.
We provide upper and lower bounds in this model for the same valuation classes.

Finally, we introduce a more realistic variation of these models, in which the learner can obtain information only via prices. 
This variation is natural in settings where an agent with valuation $\targetf$ is interested in purchases of goods.

\paragraph{Our Results.}
We establish lower and upper bounds, the most general of them being almost tight, on the learnability of valuation classes
in the aforementioned hierarchy. 

\begin{enumerate}
\vspace{-0.25\baselineskip}
\item 
We show a nearly tight $O(\sqrt{n})$ upper bound  and $\Omega(\sqrt{n}/\log n)$ lower bound  on the learnability of \XOS\ valuations in the PMAC model.
The key element in our upper bound
is to show that any \XOS\ function can be
approximated by the square root of a linear function to within a factor $O(\sqrt{n})$.
Using this, we then reduce the problem of PMAC-learning \XOS\ valuations to the standard problem of learning linear separators in the PAC model which can be done via a number of efficient algorithms.
Our $\Omega(\sqrt{n}/\log n)$ lower bound is information theoretic, applying to any procedure that uses a polynomial number of samples.
 We also show an $O(\sqrt{n} \log n)$ upper bound on  the learnability of subadditive valuations in the PMAC model.


\vspace{-0.25\baselineskip}
\item 
We establish a target-dependent learnability result for \XOS\ functions. 
Namely, we show
the class of \XOS\ functions representable with at most $R$ trees can be PMAC-learned to an $O(R^{\eta})$ factor in time $n^{O(1/\eta)}$ for any $\eta>0$. In particular, for $R$ polynomial in $n$, we get learnability to an $O(n^{\eta})$  factor in time $n^{O(1/\eta)}$ for any $\eta\! >\! 0$. Technically, we prove this result   via a novel structural result showing that a \XOS\ function can be approximated well by the $L$-th root of a degree-$L$ polynomial over the natural feature representation of the set $S$.
Conceptually, this result highlights the importance of the complexity of the target function for polynomial time learning\footnote{Since the class of \XOS\ functions representable with at most a polynomial number of trees has small complexity, learnability would be immediate if we did not care about computational efficiency.}.

\vspace{-0.25\baselineskip}
\item 
 By exploiting novel structural results on approximability with simple functions, we provide much better
upper bounds for other interesting subclasses of \OXS\ and \XOS. These include \OXS\ and \XOS\ functions with
a small number of leaves per tree and \OXS\ functions with a small number of trees. Some of these classes have been considered
\mbox{in the context of economic optimization problems~\cite{secretary,BhawalkarR11WelfareGfCAwIB, BuchfuhrerDFKMPSSU10InapproximabilityfVBCA}, but we are the first to study their learnability.}
We also show that the previous $\tilde{\Omega}(n^{1/3})$ lower bound for PMAC-learning submodular functions~\cite{BalcanH10LearningSF} applies to the much simpler class of gross substitutes.

\vspace{-0.25\baselineskip}
\item 

The structural results we derive for analyzing learnability in the distributional learning setting also have implications for the model of exact learning with value queries~\cite{avrim03,nick09,SF}. In particular, they lead to new upper bounds for \XOS\ and \OXS\ as well as new lower bounds for \XOS, gross substitutes, and \OXS.

\vspace{-0.25\baselineskip}
\item 
Finally, we introduce a new model for learning with prices in which the learner receives less information on the values $\targetf(S_{1}), \targetf(S_{2}), \dots$: for each $l$, the learner can only \emph{quote} a price $p_{l}$ and observe whether the agent buys $S_{l}$ or not, i.e. whether $p_{l} \leq \targetf(S_{l})$ or not.
This model is more realistic in economic settings where agents 
interact with a seller via prices only.
Interestingly, many of our upper bounds, both for PMAC-learning and learning with value queries, are preserved in this model (all lower bounds automatically continue to hold).
\end{enumerate}
\vspace{-0.25\baselineskip}

Our results are summarized in Table~\ref{table:summary}. Note that all our upper bounds are efficient and all the lower bounds are information theoretic.
Our analysis has a number of interesting byproducts that should be of interest to the Combinatorial Optimization community.
 For example, it implies that recent lower bounds of~\cite{avrim03,GKTW09,nick09,SF} on
 optimization under submodular cost functions also apply to the smaller classes of \OXS\ and  gross substitutes.

\begin{table}[bht]
\begin{tabular}{|@{}c||@{}c|@{\:}c@{\:}||c@{\:}|c@{\:}||@{}}
\hline
Classes of valuations& PMAC~{\small\cite{BalcanH10LearningSF}}  & PMAC with prices
& Value Queries~{\small\cite{nick09}}  & VQ with prices
\\ \hline\hline
subadditive & $\tilde{\Theta}(n^{1/2})$~\thispaper & $\tilde{\Theta}(n^{1/2})$~\thispaper & $O(n)$~\folklore & $O(n)$~\thispaper
\\ \hline
\XOS\ & $\tilde{\Theta}(n^{1/2})$~\thispaper & $\tilde{\Theta}(n^{1/2})$~\thispaper & $\tilde{\Omega}(n^{1/2})$~\thispaper~{\small\cite{nick09}} & $\tilde{\Omega}(n^{1/2})$~\thispaper
\\
\XOS\ with $\leq R$ trees& $O(R^{\eps})$~\thispaper & $O(R^{\eps})$~\thispaper & $O(R)$~\thispaper & $O(R)$~\thispaper
\\ \hline
submodular & $(\tilde{\Omega}(n^{1/3}), O(n^{1/2}))$~{\small\cite{BalcanH10LearningSF}} &
   \begin{tabular}{@{}c@{}} $(\tilde{\Omega}(n^{1/3}), O(n^{1/2}))$ \\ \thispaper \end{tabular}
   & $\tilde{\Theta}(n^{1/2})$~~{\small\cite{nick09}} & --
\\ \hline
gross substitutes & $\tilde{\Omega}(n^{1/3})$~\thispaper & $\tilde{\Omega}(n^{1/3})$~\thispaper  & $\tilde{\Omega}(n^{1/2})$~\thispaper & $\tilde{\Omega}(n^{1/2})$~\thispaper
\\ \hline
\begin{tabular}{@{}c@{}} \OXS\ with $\leq R $ trees\\ or $\leq\! R$ leaves per tree\end{tabular}
& $O(R)$~\thispaper & $O(R)$~\thispaper & $O(R)$~\thispaper & $O(R)$~\thispaper \\
\hline
\end{tabular}
\caption{Lower and upper bounds for learnability factors achievable in different models for standard classes of valuations (presented in decreasing order of generality). All the upper bounds refer to polynomial time algorithms.
Our construction for the $\Omega(n^{1/2}/\log n) = \tilde{\Omega}(n^{1/2})$ lower bound on learning \XOS\ valuations with value queries is simpler than the construction for the same asymptotic lower bound of Goemans et al.~\cite{nick09}.}
\label{table:summary}
\end{table}

\paragraph{Related Work}
We study classes of valuations with fundamental properties (subadditivity and submodularity)
or that are natural constructs used widely for optimization in economic settings~\cite{book07}:
\XOS~\cite{BhawalkarR11WelfareGfCAwIB,DobzinskiNS05ApproximationAfCAwCFB,DobzinskiNS06TruthfulRMfCA, Feige06OnMaximizingWwUFaS,LehmannLN02CombinatorialAwDMU}, i.e. MAX of SUMs,
\OXS~\cite{BuchfuhrerDFKMPSSU10InapproximabilityfVBCA,BuchfuhrerSS10ComputationaIiCPP,DayR06AssignmentPaCA,Singer10BudgetFM}, i.e. SUM of MAXs,
and gross substitutes, fundamental in allocation problems~\cite{AusubelM02AscendingAwPB,CramtonSS06CombinatorialA,GulS99WalrasianEwGS}.

We focus on two  widely studied learning paradigms:
 approximate learnability in a distributional setting~\cite{AB99,KV:book94,Valiant:acm84,Vapnik:book98}
and approximate learning everywhere with value queries~\cite{avrim03,nick09,LahaieCP05MoreotPoDQiCALALaHI,SF}.
In the first paradigm, we use a model introduced by~\cite{BalcanH10LearningSF}
for the approximate learnability of submodular functions.
We circumvent the main negative result in~\cite{BalcanH10LearningSF} for certain interesting classes and match the main positive result in~\cite{BalcanH10LearningSF} for the more general \XOS\ and subadditive classes.
With a few recent exceptions~\cite{nick09,SF},
models for the value queries paradigm require exact learning
and are necessarily limited to much less general function classes than the ones we study here:
read-once and Toolbox DNF valuations~\cite{avrim03}, polynomial or linear-threshold valuations~\cite{LahaieP04ApplyingLAtPE} or  MAX or SUM (of bundles) valuations~\cite{LahaieCP05MoreotPoDQiCALALaHI}. The latter two works also consider demand queries,
where the learner can specify a set of prices and obtain a preferred bundle at these prices. In contrast, in our variation of learning with prices,
 the learner and the agent  focus on one price only (for the current bundle) instead of as many as $2^{n}$ prices.

\paragraph{Paper structure}
After defining valuation classes and our models  in Section~\ref{sec:model},
 we study the distributional learnability of valuation classes in decreasing order of generality.
First, Section~\ref{sec:XOS} presents our results on \XOS\ and subadditive valuations, including  our most general bounds, that are almost tight.
Section~\ref{sec:GS-OXS} presents a  hardness result  for gross substitutes and positive results  on several interesting subclasses of \OXS.
Section~\ref{value-queries} provides positive results
for most of these classes in learning with value queries.
Finally,
in Section~\ref{sec:prices} we show that many of our results
extend to a natural  framework\ in economic applications, even though the learner receives less information in this framework.

\vspace{-0.5\baselineskip}
\section{Preliminaries}

\label{sec:model}
\vspace{-0.25\baselineskip}

We consider a universe $[n]\!=\!\{1, \ldots, n\}$ of items and
\emph{valuations}, i.e. monotone non-negative set functions $f:2^{[n]} \!\to\! {\bR}_{+}$:
 $f(S \cup\{i\}) \geq f(S) \geq 0, \forall S \subseteq [n], \forall i\not\in S$.
For a set $S \subseteq [n]$ we denote by $\ind(S) \in \set{0,1}^{n}$ its indicator vector; so 
$(\ind(S))_{i}=1$ if $i \!\in\! S$ and $(\ind(S))_{i}=0$ if $i \!\not\in\! S$.
We often use this natural isomorphism between $\set{0,1}^n$ and $2^{[n]}$.

\smallskip\noindent
\textbf{Classes of Valuation Functions}.
\label{val-defs}
It is often the case that valuations are quite structured in terms of representation or constraints on the values of different sets.
We now define and give intuition for most valuation classes we focus on.

The following standard properties have natural interpretations in economic settings. A subadditive valuation models the lack of synergies among sets: a set's  value is at most the sum of the values of its parts.
A submodular valuation models decreasing marginal returns: an item $j$'s marginal value cannot go up if one expands the base set $S$ by item $i$.
\begin{Definition}
\label{subadditive}
A valuation $f :2^{[n]} \to \bR_+$ is called \emph{subadditive} if and only if $ f(S \cup S') \leq f(S) + f(S'),  \forall S, S' \subseteq [n]$.
\label{submodular}
$\!{}\!$A valuation $f$
is called \emph{submodular} if and only if
$f(S \cup\{i,j\}) - f(S \cup\{i\}) \leq f(S\cup\{j\}) - f(S)$,
 $\forall S \subseteq [n], \:\forall i,j \not\in S$.
\end{Definition}

\XOS\ is an important class of subadditive, but not necessarily submodular, valuations studied in combinatorial auctions~\cite{DobzinskiNS05ApproximationAfCAwCFB,DobzinskiNS06TruthfulRMfCA, Feige06OnMaximizingWwUFaS,LehmannLN02CombinatorialAwDMU}.
A  valuation  is \XOS\ if and only if it can be represented as a depth-two tree with a MAX root and SUM inner nodes. Each such SUM node has as leaves a subset of items with associated positive weights.
For example, a traveler may choose the destination of maximum value among several different locations, where each location has a number of amenities and the valuation for a location is linear in the set of amenities.
\begin{Definition}  
A valuation $f$ is \XOS\ if and only if it can be represented as
 the maximum of $k$ linear valuations, 
for some $k\! \geq\! 1$.
That is, $f(S) = \max_{j=1\dots k} w_{j}\transpose \ind(S)$ where $w_{ji} \geq 0, \forall j=1\dots k, \forall i = 1\dots n$.
\label{def:XOS}
\end{Definition}
We say that item $i$ appears as a leaf in a SUM tree $j$ if $i$ has a positive value in tree $j$.

As already mentioned, any submodular valuation can be expressed as \XOS\footnote{As showin in ~\cite{LehmannLN02CombinatorialAwDMU}, any submodular $f$  can be represented as the MAX of $n!$ \OR\ trees, each with $n$ leaves:
for every permutation $\pi$ of $[n]$, we build a \OR\ tree $T_{\pi}$
 with one leaf for each item $j \in [n]$,
where item $\pi(j)$ has weight its marginal value
$
 f(\{ \pi(1),\dots, \pi(j-1), \pi(j) \}) ~ - ~ f(\{ \pi(1),\dots,\pi(j-1) \}) $.
}.
When reversing the roles of operators MAX and SUM we obtain a strict \emph{sub}class of submodular valuations, called \OXS,\footnote{\XOS\ and \OXS\ stand for XOR-of-OR-of-Singletons and OR-of-XOR-of-Singletons, where
MAX is  denoted by XOR and SUM by OR\cite{noam-book-chapter,Sandholm99AlgorithmfOWDiCA}.}
that is also relevant to auctions~\cite{DayR06AssignmentPaCA, DobzinskiNS06TruthfulRMfCA,LehmannLN02CombinatorialAwDMU,Singer10BudgetFM}.
To define \OXS\ we also define a unit-demand valuation,
in which the value of any  set $S$ is the highest weight of any item in $S$.
A unit-demand valuation is essentially a tree, with a MAX root
and one leaf for each item with non-zero associated weight.
In an \OXS\ valuation, a set's value is given by the best way to split the set among several unit-demand valuations.
An \OXS\ valuation $f$ has a natural representation as a depth-two tree, with a SUM node at the root (on level 0), and subtrees\footnote{Another \OXS\ encoding uses a weighted bipartite graph $G_{n,k}$ 
where edge $(i,j)$ has the weight of item $i$ in $f_{j}$;
$f(S)$ is the weight of a maximum matching of $S$ to the $k$ nodes for the unit-demand $f_{j}$'s.
Also, \OXS\ valuations with weights $\{0,\! 1\}$ are exactly rank functions of transversal matroids.} corresponding to the unit-demand valuations $f_{1}, \dots, f_{k}$.
The value $f(S)$ of any set $S$ corresponds to best way of
 partitioning $S$ into $(S_{1}, \dots, S_{k})$ and adding up the per-tree values
 $\{f_{1}(S_{1}), \dots, f_{k}(S_{k})\}$.
\begin{Definition}
$\!{}$A \emph{unit-demand valuation} $f$ is given by weights $\{w_{1},..., w_{n}\!\} \!\subset\! \bR_{+}$ such that
$f(S) = \max_{i \in S} w_{i}, \forall S\! \subseteq\! [n]$.
An \emph{\OXS\ valuation} $f$ is given by the convolution of $k \geq 1$ unit-demand valuations $f_{1}, \dots, f_{k}$:
that is,
\\ \noindent
$f(S) = \max \{ f_{1}(S_{1}) + \dots + f_{k}(S_{k}): (S_{1}, \dots, S_{k}) \text{ is a } \text{partition of } S \}, \forall S\! \subseteq\! [n]$.
\label{def:OXS}
\end{Definition}

Finally, we consider \emph{gross substitutes} (GS) valuations,
of great interest in allocation problems~\cite{CramtonSS06CombinatorialA,GulS99WalrasianEwGS,book07}.
Informally, an agent with a gross substitutes valuation would not buy fewer items of one type (e.g. skis) if items of another type (e.g. snowboards) became more expensive.
That is, items can be substituted one for another in a certain sense, which is not the case
for, e.g. skis and ski boots.
See Section~\ref{sec:GS} for a formal definition and more detailed discussion.

As already mentioned, the classes of valuations we reviewed thus far form a strict hierarchy. (See~\cite{LehmannLN02CombinatorialAwDMU} for  examples  separating these classes of valuations.)
\begin{Lemma}~\cite{LehmannLN02CombinatorialAwDMU}
 $\OXS\ \subsetneq \text{gross\ substitutes} \subsetneq \text{submodular} \subsetneq \XOS\ \subsetneq \text{subadditive}$.
\label{stmt:manyIncl}
\end{Lemma}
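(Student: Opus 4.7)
The plan is to establish four strict inclusions, treating the containment and the strictness of each separately. For the four containments I will use the structural definitions of the classes; for the four strictness claims I will point to explicit separating valuations, as in~\cite{LehmannLN02CombinatorialAwDMU}.

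For $\OXS \subseteq \text{gross substitutes}$, I would first argue that every unit-demand valuation is gross substitutes: raising the price on any item only removes it from consideration, and the remaining demand is still the max-weight leftover, which restricted to items whose prices did not change is unchanged. I then invoke the classical closure of GS under sup-convolution, i.e., the operation $(f \oplus g)(S) = \max_{S = S_1 \cup S_2}\{f(S_1) + g(S_2)\}$, from which Definition~\ref{def:OXS} produces a GS function. For $\text{GS} \subseteq \text{submodular}$, the standard route is to fix $A \subseteq B$ and $i \notin B$, select a price vector under which the optimal demand at prices slightly above $f$'s marginals certifies $f(A \cup \{i\}) - f(A) \geq f(B \cup \{i\}) - f(B)$ directly from the substitutes condition; this is packaged as the M$^\natural$-concavity of GS. For $\text{submodular} \subseteq \XOS$, the paper's footnote already supplies the permutation-based representation: for each $\pi$, the SUM tree $T_\pi$ with marginal-contribution weights satisfies $T_\pi \transpose \ind(S) \le f(S)$ by submodularity, with equality when $\pi$ enumerates $S$ before its complement. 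Finally, $\XOS \subseteq \text{subadditive}$ is immediate: if $w^\ast$ attains $f(S \cup T) = \max_j w_j\transpose \ind(S \cup T)$, then
\[
f(S \cup T) = w^\ast\transpose \ind(S \cup T) \le w^\ast\transpose \ind(S) + w^\ast\transpose \ind(T) \le f(S) + f(T),
\]
since the weights are non-negative and the max over trees is taken pointwise.

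For strictness, I would exhibit an explicit witness in each gap. A subadditive-but-not-XOS witness can be taken to be a small function whose set-values defeat any MAX-of-linear representation, using Feige's equivalence between XOS and fractional subadditivity to rule out all candidate tree families at once. An XOS-but-not-submodular witness comes from a carefully weighted collection of SUM trees on a tiny ground set whose overlap structure yields non-diminishing marginals while still being fractionally subadditive. A submodular-but-not-GS witness can be obtained from a budget-additive valuation with a budget tuned to create a complementarity between two groups of items. A GS-but-not-OXS witness is the rank function of a non-transversal matroid (e.g., the graphic matroid of $K_4$): matroid rank functions are M$^\natural$-concave and hence GS, while the $\{0,1\}$-valued OXS functions are exactly ranks of transversal matroids, so any non-transversal matroid separates the two classes.

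The main obstacle is the $\text{GS} \subsetneq \text{submodular}$ separation: one must verify that the chosen submodular candidate genuinely violates the gross substitutes condition for some price vector, which requires checking that no consistent tie-breaking among demanded bundles can salvage substitutability. The other three separations are essentially routine once the right witness is in hand, since in each case the obstruction is local (a single sub/super-modularity inequality, or a single uncoverable set-value).
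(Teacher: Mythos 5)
The paper does not actually prove this lemma; it is stated as a citation to Lehmann, Lehmann, and Nisan~\cite{LehmannLN02CombinatorialAwDMU}, so there is no in-paper proof to compare against. Your proposal reconstructs the standard route from the literature, and the overall decomposition into four containments plus four strict separations is sound. The containment arguments are essentially the standard ones (unit-demand is GS, GS is closed under convolution, GS implies submodular via $\mathrm{M}^{\natural}$-concavity, the permutation-tree $\XOS$ representation of submodular functions, and the monotone-nonnegativity argument for $\XOS \subseteq \text{subadditive}$), and the $\XOS\subseteq\text{subadditive}$ calculation you spell out is correct.

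Two points worth fixing. First, your convolution is written as $(f\oplus g)(S)=\max_{S=S_1\cup S_2}\{f(S_1)+g(S_2)\}$, which allows $S_1$ and $S_2$ to overlap; as stated in the paper's Definition~\ref{def:OXS}, the sup-convolution underlying \OXS\ is over \emph{partitions} of $S$ (equivalently disjoint $S_1,S_2$), and with overlaps allowed the expression over-counts (e.g. $S_1=S_2=S$ already gives $f(S)+g(S)$). The closure-under-convolution lemma you are invoking is for the partition version. Second, three of your four strictness witnesses are left as descriptions rather than explicit functions: the subadditive-not-$\XOS$ witness, the $\XOS$-not-submodular witness, and the submodular-not-GS witness are all gestured at but not pinned down. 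These are all routine once written (e.g.\ $f(S)=1$ for $1\le|S|\le2$ and $f(S)=2$ for $|S|\ge3$ on four items is subadditive but not fractionally subadditive; a two-tree $\XOS$ on four items with disjoint supports $\{a,b\}$ and $\{c,d\}$ and unit weights violates diminishing marginals; and a budget-additive $\min(2,\,|S\cap\{a,b\}|+2\cdot\mathbf{1}[c\in S])$ fails the condition of Eq.~\eqref{eq:GSabcS}), but the proof is not complete until they are supplied. The GS-not-$\OXS$ witness via the graphic matroid of $K_4$ is concrete and correct, and it usefully dovetails with the paper's own Lemma~\ref{stmt:rankGS}, which establishes that matroid rank functions are GS.
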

Only the class of submodular valuations has been studied from an approximate learning perspective~\cite{BalcanH10LearningSF,nick09}.
We study the approximate  learnability of all other classes in this hierarchy, in a few natural models that we~introduce~now.

\smallskip\noindent
\textbf{Distributional Learning: PMAC}.
We primarily study learning in the PMAC model of ~\cite{BalcanH10LearningSF}.
We assume that the input for a learning algorithm is  a set $\trainS$
of polynomially many labeled examples drawn i.i.d.~from
some fixed, but unknown, distribution $D$ over points in $2^{[n]}$.
The points are labeled by a fixed, but unknown, target function $\targetf : 2^{[n]} \rightarrow \bR_+$.
The goal is to output a hypothesis function $f$ such that,
with high probability over the choice of examples,
the set of points for which $f$ is a good approximation for $\targetf$
has large measure with respect to $D$. Formally:

\begin{Definition}~ 
We say that a family $\cF$ of valuations is  \emph{PMAC-learnable with approximation factor $\alpha$} if there exists an
algorithm $\cA$ such that for any distribution $D$ over $2^{[n]}$, for any
target function $\targetf \in \cF$, and for any sufficiently small $\eps \geq 0, \delta \geq 0$,   $\cA$ takes as input  samples $\{(S_i, \targetf(S_i))\}_{1\leq i \leq m}$ where each $S_i$ is drawn independently
from  $D$ and  outputs a valuation $f : 2^{[n]} \to \bR$ 
such that
$
 \probover{ S_1, \dots, S_m \sim D } {
\probover {S \sim D} {f(S) \leq \targetf(S) \leq \alpha f(S)} \geq 1-\eps } \geq 1-\delta
$.
$\cA$ must use $m = \poly(n, \frac{1}{\eps},\frac{1}{\delta})$ samples and must have
running time  $\poly(n, \frac{1}{\eps},\frac{1}{\delta})$. 
\label{def:PMAC}
\end{Definition}
\noindent
PMAC stands for Probably Mostly Approximately Correct (the PAC model~\cite{Valiant:acm84} is a special case of PMAC with $\alpha\!=\!1$).

\smallskip\noindent
\textbf{Learning with Value Queries}.
We also consider the model of learnability everywhere (in the same approximate sense) with value queries.
In this model, the learning algorithm is allowed to query the value of the unknown target function $\targetf$ on a polynomial number of sets $S_{1}, S_{2}, \dots$, that may be chosen in an adaptive fashion.
The algorithm must then output in polynomial time a function $f$ that approximates $\targetf$ everywhere,
namely $f(S) \leq \targetf(S) \leq \alpha f(S), \,\forall S \subseteq [n]$.
A formal definition of this model and the results are presented in Section~\ref{value-queries}.

\smallskip\noindent
\textbf{Learning with Prices}. This framework aims to model economic interactions more realistically and considers a setting where an agent with the target valuation $\targetf$ is interested in purchasing bundles of goods.
In this framework, the learner does not obtain the value of $\targetf$ on each input set $S_{1}, S_{2}, \dots$.
Instead, for each input set $S_{l}$ the learner quotes a price $p_{l}$ on $S_{l}$ and observes whether the agent purchases $S_{l}$ or not, i.e. whether $p_{l} \leq \targetf(S_{l})$ or not.
 The goal remains to approximate the function $\targetf$ well, i.e. within an $\alpha$ multiplicative factor: 
   on most sets from $D$ with high confidence for PMAC-learning
   and on all sets with certainty for learning everywhere with value queries.
This  framework\ and the associated results  are presented in Section~\ref{sec:prices}.

\section{PMAC-learnability of \XOS\ valuations and subadditive valuations}
\label{sec:XOS}
In this section we give nearly tight lower and upper bounds of $\tilde{\Theta}(\sqrt{n})$ for the PMAC-learnability of \XOS\ and subadditive valuations.
In contrast, there is a $\tilde{\Theta}(n^{1/6})$ gap between the existing bounds for submodular valuations~\cite{BalcanH10LearningSF}.
Furthermore, we reveal the importance of considering the complexity of the target function (in a natural representation) for polynomial-time PMAC learning.
We show that \XOS\ valuations representable with a polynomial number of \OR trees are
PMAC-learnable to a $n^{\eta}$ factor in time $n^{1/\eta}$, for any  $\eta > 0$.
Finally, we show that \XOS\ valuations representable with an arbitrary number of \OR trees, each with at most $R$ leaves, are PMAC-learnable to an $R$ factor.

\subsection{Nearly tight lower and upper bounds for learning \XOS\ and subadditive functions}
\label{upper bound of XOS}
\label{lower bound of XOS}
We establish our $\tilde{\Theta}(\sqrt{n})$ bounds by showing an
$\Omega(\sqrt{n}/{\log n})$ lower bound for the class of \XOS\ valuations (hence valid for subadditive valuations) and upper bounds of
$O(\sqrt{n} )$ and $O(\sqrt{n} \log n)$ for the classes of \XOS\ and subadditive valuations respectively.
We note that our lower bound construction is much simpler and gives a better bound
than the $\Omega(n^{1/3}/\log n)$ construction of~\cite{BalcanH10LearningSF}.
However, the latter construction is for matroid rank functions, a significantly smaller class.
For our upper bounds we provide structural results
showing that \XOS\ and subadditive functions can be approximated by a linear function to an $O(\sqrt{n})$ and $O(\sqrt{n} \ln n)$ factor respectively. We can then PMAC-learn these classes via a reduction to the classical problem of PAC-learning a linear separator.

\begin{Theorem}
\label{stmt:XOS-upper-bound}
\label{stmt:subadd-upper-bound}
\label{alg-lin-sep-first}
\label{stmt:XOS-lower-bound}
The classes of $\XOS$ and subadditive functions are PMAC-learnable to a
$\tilde{\Theta}(\sqrt{n})$ approximation factor.%
\end{Theorem}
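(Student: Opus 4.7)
\ The strategy has three components: an algorithmic $O(\sqrt{n})$ upper bound for \XOS, an extension to subadditive at an $O(\log n)$ multiplicative loss, and an information-theoretic $\Omega(\sqrt{n}/\log n)$ lower bound already for \XOS.

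\emph{Upper bound for \XOS.} The heart of the argument is a structural lemma: for every \XOS\ valuation $f$ there exist weights $w_1,\dots,w_n\geq 0$ such that
\[
   \sqrt{w\cdot \ind(S)}\ \leq\ f(S)\ \leq\ \sqrt{n}\,\sqrt{w\cdot \ind(S)}\qquad\forall S\neq\emptyset,
\]
equivalently $w\cdot \ind(S)\leq f(S)^2\leq n\cdot w\cdot \ind(S)$. The upper side is straightforward: by subadditivity $f(S)\leq \sum_{i\in S}f(\{i\})$, and by Cauchy--Schwarz this squares to $\leq |S|\sum_{i\in S}f(\{i\})^2\leq n\sum_{i\in S}f(\{i\})^2$; the lower side rests on $f(S)\geq \max_{i\in S}f(\{i\})\geq \sqrt{(1/|S|)\sum_{i\in S}f(\{i\})^2}$. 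The delicate step is producing a \emph{single} $w$ that realizes both bounds uniformly in $S$ --- naive choices such as $w_i=f(\{i\})^2$ yield only a factor-$n$ gap --- which I would handle by a polytope/LP duality argument over $\{w\geq 0 : w\cdot \ind(T)\leq f(T)^2\ \forall T\}$, in the spirit of Goemans et al.\ for submodular functions. Given the lemma, PMAC-learning reduces to PAC-learning a non-negative halfspace: from samples $(S_i,\targetf(S_i))$ I form the linear constraints $w\cdot \ind(S_i)\leq \targetf(S_i)^2$ and $n\,w\cdot \ind(S_i)\geq \targetf(S_i)^2$ and solve the resulting LP for a feasible $w^*$ (whose existence is guaranteed by the lemma applied to $\targetf$). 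Because halfspaces on $\{0,1\}^n$ have VC-dimension $O(n)$, uniform convergence yields that $\hat f(S)=\sqrt{w^*\cdot \ind(S)}$ is an $O(\sqrt n)$ PMAC-approximation of $\targetf$ using $\poly(n,1/\eps,1/\delta)$ samples and polynomial time.

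\emph{Subadditive upper bound and lower bound.} For subadditive $\targetf$ I invoke the known result (Dobzinski; Bhawalkar--Roughgarden) that there exists an \XOS\ $g$ with $g(S)\leq \targetf(S)\leq O(\log n)\,g(S)$ for all $S$; feeding $\targetf$-labelled samples to the \XOS\ algorithm yields an $O(\sqrt n)$-approximation of such a $g$, and composing the two gaps gives the claimed $O(\sqrt n\,\log n)=\tilde O(\sqrt n)$ bound. For the lower bound I would exhibit a family $\{f_R\}_R$ of \XOS\ valuations indexed by a ``hidden'' subset $R\subseteq[n]$ of size $\approx\sqrt n$, arranged so that (i) all $f_R$ agree on ``typical'' $S$ (behaving like a common baseline linear function) and (ii) on $S$ that intersect $R$ heavily the values differ across different $R$'s by a factor $\Omega(\sqrt n/\log n)$; concretely this can be done by taking the baseline $L_0(S)=\alpha|S|$ together with an extra \XOS\ tree $L_R(S)=\beta|S\cap R|$ for large $\beta$. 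I would then fix a distribution $D$ placing most mass on typical sets (so that $\poly(n)$ samples reveal essentially no information about $R$) while still assigning $\Omega(\eps)$ mass to the heavy-intersection sets, forcing any hypothesis to err by $\Omega(\sqrt n/\log n)$ on an $\eps$-fraction of $D$.

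\emph{Main obstacle.} The crux is the \XOS\ structural lemma. An arbitrary \XOS\ function can have exponentially many trees, so the single linear $w$ must be found globally rather than by any pointwise rule on $f(\{i\})$. Carrying out the polytope/duality argument --- showing that the polytope of feasible linear lower bounds on $f^2$ contains a point which is within factor $n$ of tight at every set --- is the technical heart of the upper bound, while the XOS realizability of the family $\{f_R\}$ and the measure bookkeeping are the technical heart of the lower bound.
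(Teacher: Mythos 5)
Your high-level plan matches the paper (structural lemma reducing to learning a halfspace, composing with the $O(\log n)$ XOS-approximation of subadditive functions, and an information-theoretic lower bound), but both hard parts of your proposal have genuine gaps.

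\textbf{Upper bound.} You correctly identify the structural lemma and correctly observe that the naive choice $w_i = f(\{i\})^2$ only gives a factor-$n$ gap. But ``a polytope/LP duality argument over $\{w\geq 0 : w\cdot \ind(T)\leq f(T)^2\ \forall T\}$ in the spirit of Goemans et al.''\ is not an argument; it is a pointer to one, and the pointer elides the two steps that do the work. The paper's proof uses the polyhedron $P(\targetf)=\{x\geq 0: \sum_{i\in S}x_i\leq \targetf(S)\ \forall S\}$ attached to $\targetf$ itself (not to $\targetf^2$). The first nontrivial step is to show, for $\XOS$ functions, that $\targetf(T)=\max\{\sum_{i\in T}x_i : x\in P(\targetf)\}$; this uses the equivalence of $\XOS$ with fractional subadditivity and a primal--dual LP argument, and it is exactly the point where the argument extends beyond Goemans et al.'s submodular setting (where the analogous fact is the Lov\'asz-extension/greedy identity). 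The second step symmetrizes $P(\targetf)$, applies John's theorem to get an ellipsoid sandwich $\tfrac{1}{\sqrt n}\mathcal{E}\subseteq P\subseteq \mathcal{E}$, and uses coordinate-sign symmetry to see that $\mathcal E$ is axis-aligned so that the support function of $\tfrac{1}{\sqrt n}\mathcal{E}$ has exactly the form $\sqrt{w\transpose\ind(T)}$. Neither step is addressed by your sketch, and neither is ``routine'' once one tries to write it down; in particular it is the LP step, not John's theorem, that is specific to $\XOS$.

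\textbf{Lower bound.} Your single-hidden-$R$ construction does not work. The problem is structural. To force error on an $\eps$-fraction of $D$, the distribution must place $\Omega(\eps)$ mass on sets whose values the learner cannot pin down after $\poly(n)$ samples. In your construction the ``interesting'' sets are those with large $|S\cap R|$; but this collection of sets depends on $R$. If $D$ is chosen independently of $R$ (as it must be, for a fair information-theoretic argument over a family of targets sharing one distribution) then for any typical size regime the $R$-heavy sets have vanishing, not $\Omega(\eps)$, mass under $D$ --- e.g.\ uniform over size-$\sqrt n$ sets gives $|S\cap R|\approx 1$ for almost all $S$. Conversely if you let $D$ itself load mass onto $R$-heavy sets then the samples reveal $R$. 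The paper sidesteps this by not hiding a subset $R\subseteq[n]$ at all: it probabilistically constructs a superpolynomial family $\mathcal{A}=\{A_1,\dots,A_k\}$, $k=n^{\Theta(\log\log n)}$, of size-$\Theta(\sqrt n)$ sets with pairwise intersections $O(\log n)$, takes $D$ to be uniform on $\mathcal{A}$ (independent of the target), and hides a subfamily $\mathcal{B}\subseteq\mathcal{A}$, with $f_\mathcal{B}(S)=\max_{A_i\in\mathcal{B}}|S\cap A_i|$. The near-orthogonality forces $f_\mathcal{B}(A_i)=\Omega(\sqrt n)$ iff $A_i\in\mathcal{B}$ and $O(\log n)$ otherwise, and since $|\mathcal{A}|$ is superpolynomial, a polynomial sample reveals membership in $\mathcal{B}$ for only a negligible fraction of $\mathcal{A}$. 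Your construction needs to be replaced by something with this shape: a fixed superpolynomial family of near-orthogonal ``probe'' sets carrying the distribution, and a target indexed by a hidden \emph{subfamily}, not a hidden subset of $[n]$.
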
%

\renewcommand{\targetfpower}[1]{{(\targetf(#1))^{2}}}

\vspace{-0.5\baselineskip}
\noindent\begin{proofsketch} 
\textbf{Lower bound}:
We start with an information theoretic lower bound showing that the class of \XOS\ valuations cannot be learned with an approximation factor of $o(\frac{\sqrt{n}}{\log n})$ from a polynomial number of samples.

Let  $k\!=\!n^{\frac{1}{3}\log\log n}\!$.
For large enough $n$ we can show that there exist sets $A_1,A_2,...,A_k \subseteq [n]$ such that

\noindent
(i) $\sqrt{n}/2\le |A_i|\le 2\sqrt{n}$ for any $1\le i\le k$, i.e. all sets have large size $\Theta(\sqrt{n})$ and 
\\
(ii) $|A_i\cap A_j|\le \log n$ for any $1\le i<j\le k$, i.e. all pairwise intersections have small size $O(\log n)$. 

We achieve this via a simple probabilistic argument where we construct each $A_i$ by picking each element in $[n]$ with probability $\frac{1}{\sqrt{n}}$. Let random variables $Y_i\!=\!|A_i|$ and $X_{ij} \!=\! |A_i \!\cap\! A_j|$. 
Obviously, $E[Y_i]\!=\!\sqrt{n}$ and $E[X_{ij}] \!=\!1$. By Chernoff bounds, 
$$
\prob{\sqrt{n}/2<Y_i<2\sqrt{n}}  > 1- 2 e^{-\sqrt{n}/8}
\text{ ~ and ~ }
\textstyle \prob{X_{ij} > \ln n} < \frac{e^{\ln n}}{\ln n^{\ln n}} = n^{-(\ln\ln n - 1)},
~~\forall 1\le i<j\le k.
$$
By union bound the probability that (i) and (ii) hold is at least $1- 2\, k\, e^{-\sqrt{n}/8} - k^2 n^{-(\ln\ln n-1)} > 0$.

Given the existence of  the \setfamily\ $\mathcal{A} = \{A_{1}, \dots, A_{k}\}$ of sets with properties (i) and (ii) above, we construct a hard family of \XOS\ functions as follows.
For any sub\setfamily\ $\mathcal{B} \subseteq \mathcal{A}$, we construct an \XOS\ function
$f_{\mathcal{B}}$
with large values for sets $A_{i} \in \mathcal{B}$ and small values for sets $A_{i} \not\in \mathcal{B}$.
Let $h_{A_i}(S)=|S\cap A_i|$ for any $S\subseteq [n]$.
For any sub\setfamily\ $\mathcal{B}\subseteq \mathcal{A}$,
define the \XOS\ function $f_{\mathcal{B}}$ by $f_{\mathcal{B}}(S)=\XOR_{A_i\in \mathcal{B}} \ h_{A_i}(S)$.
We claim that
$f_{\mathcal{B}}(A_i) = \Omega(\sqrt{n})$, if $A_i\in \mathcal{B}$ but
$f_{\mathcal{B}}(A_i) = O(\log n)$, if $A_i\not\in \mathcal{B}$.
Indeed, for any $A_i\in \mathcal{B}$, we have $h_{A_i}(A_{i})=|A_i|\ge \sqrt{n}/2$, hence $f_{\mathcal{B}}(A_i)=\Omega(\sqrt{n})$; for any $A_j\not\in \mathcal{B}$, by our construction of $\mathcal{A}$, we have $h_{A_i}(A_j)=|A_i\cap A_j|\le \log n$, implying $f_{\mathcal{B}}(A_j)=O(\log n)$.
For an unknown $\mathcal{B}$,
the problem of learning  $f_{\mathcal{B}}$
  within a factor of $o(\sqrt{n}/\log n)$
  under a uniform distribution on $\mathcal{A}$
  amounts to distinguishing $\mathcal{B}$ from $\mathcal{A}$. This is not possible from a polynomial number of samples since $|\mathcal{A}|=n^{\frac{1}{3}\log\log n}$.
 In particular, if $\mathcal{B} \subseteq \mathcal{A}$ is chosen at random,
then any algorithm from a polynomial-sized sample will have error
$\Omega(\frac{\sqrt{n}}{\log n})$ on a region of probability mass greater than
$\frac{1}{2} - \frac{1}{\poly(n)}$.

\smallskip\noindent
\textbf{Upper bounds}:
We show that the class of \XOS\ valuations can be PMAC-learned to a $O(\sqrt{n})$ factor
and that the class of subadditive valuations  can be PMAC-learned to a $O(\sqrt{n} \log n)$ factor, by using  $O(\frac{n}{\epsilon} \log \frac{n}{\delta \epsilon})$ training examples and running time  $\poly(n, \frac{1}{\eps},\frac{1}{\delta})$.
To prove these bounds we start by providing a structural result (Claim~\ref{Satoru} below) showing that \XOS\ valuations can be approximated to a $\sqrt{n}$ factor by the square root of a linear function.

\begin{Claim}
Let $f: 2^{[n]} \rightarrow \reals_{+}$ be a non-negative \XOS\ function with
$f(\emptyset)=0$.
Then there exists a function $\hf$ of the form $\hf(S) = \sqrt{ w \transpose \ind(S) }$
where $w \in \bR^n_+$
such that $\hf(S) \leq f(S) \leq \sqrt{n} \hf(S)$ for all $S \subseteq [n]$.
\label{Satoru}
\end{Claim}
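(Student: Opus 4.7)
The plan is to construct the weight vector $w$ by balancing two pressures: it must be large enough componentwise so that $f$ is not too far above $\sqrt{w^{\top}\chi(S)}$, but small enough so that the partial sums $w^{\top}\chi(S)$ never exceed $f(S)^2$. I would attack the upper bound first, since it dictates the form of the Cauchy--Schwarz calculation, and then set up the lower bound via a polyhedral/LP existence argument.

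For the upper bound $f(S)\le\sqrt{n}\,\hat f(S)$, the natural route is Cauchy--Schwarz applied to the XOS representation $f(S)=\max_j w_j^{\top}\chi(S)$. Fix $S$ and let $j^{*}=j^{*}(S)$ be the tree attaining the maximum; then
\[
f(S)^{2}=\Bigl(\sum_{i\in S}w_{j^{*},i}\Bigr)^{2}\le |S|\sum_{i\in S}w_{j^{*},i}^{2}\le n\sum_{i\in S}w_{j^{*},i}^{2}.
\]
So the upper bound $f(S)^{2}\le n\,w^{\top}\chi(S)$ will follow as soon as $w$ dominates, coordinatewise on $S$, the squared weights of every maximizing tree. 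This strongly suggests taking $w_i$ on the order of $\max_j w_{j,i}^{2}=f(\{i\})^{2}$, up to a scaling that I'll choose to make the lower bound go through.

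For the lower bound $\hat f(S)\le f(S)$, i.e.\ $w^{\top}\chi(S)\le f(S)^{2}$ for every $S$, a flat choice like $w_i=f(\{i\})^{2}$ already fails on unit-demand functions, so I would not try to write $w$ down in closed form. Instead I would work inside the polytope
\[
P=\bigl\{\,w\in\mathbb{R}_{+}^{n}:\;\sum_{i\in S}w_i\le f(S)^{2}\text{ for all }S\subseteq[n]\,\bigr\}
\]
and prove existence of some $w^{*}\in P$ satisfying $\sum_{i\in S}w^{*}_{i}\ge f(S)^{2}/n$ for every $S$. As a warm-up, for each fixed $S$ the LP $\max\{\sum_{i\in S}w_i:w\in P\}$ already has value $\ge f(S)^{2}/n$, witnessed by $w^{S}_{i}=w_{j^{*}(S),i}^{2}$: this lies in $P$ because for every $T$,
\[
\sum_{i\in T}w_{j^{*}(S),i}^{2}\le\Bigl(\max_{i\in T}w_{j^{*}(S),i}\Bigr)\sum_{i\in T}w_{j^{*}(S),i}\le f(T)\cdot f(T),
\]
using $w_{j^{*}(S),i}\le f(\{i\})\le f(T)$, and it achieves value $\sum_{i\in S}w_{j^{*}(S),i}^{2}\ge f(S)^{2}/|S|\ge f(S)^{2}/n$ on $S$ by Cauchy--Schwarz applied in reverse.

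The hard part will be to upgrade these per-$S$ certificates into a single $w$ that meets the $f(S)^{2}/n$ floor on \emph{all} subsets simultaneously. I would do this by Farkas' lemma: if no such $w$ existed, there would be nonnegative multipliers $\alpha_T,\beta_T$ with $\sum_{T\ni i}\beta_T\ge\sum_{T\ni i}\alpha_T$ for every $i$ and $\sum_T\beta_T f(T)^{2}>n\sum_T\alpha_T f(T)^{2}$. The crucial step is then to use the fractional-subadditivity characterization of XOS: for each $T$ with $\alpha_T>0$, the coefficients $\lambda^{T}_{T'}=\beta_{T'}/\alpha_T$ form a valid fractional cover of $T$, yielding $\alpha_T f(T)\le\sum_{T'}\beta_{T'}f(T')$. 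Combining this inequality across $T$ with Cauchy--Schwarz should contradict the supposed strict inequality $\sum\beta f^{2}>n\sum\alpha f^{2}$, giving feasibility of the system and hence producing the required $w$; getting the exponents to line up precisely (so that the $n$ factor on the right-hand side is enough) is the step I expect to require the most care.
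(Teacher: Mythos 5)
Your approach genuinely differs from the paper's. The paper first proves the LP-duality fact $f(T)=\max\{\chi(T)^{\top}x : x\in P(f)\}$ where $P(f)=\{x\ge 0:\chi(S)^{\top}x\le f(S)\ \forall S\}$, then symmetrizes $P(f)$ over coordinate-sign flips and invokes John's ellipsoid theorem to get an axis-aligned ellipsoid sandwiched between $P$ and $\sqrt{n}P$, whose support function on $\{0,1\}^n$ is exactly of the form $\sqrt{w^{\top}\chi(S)}$. You instead try to certify directly, by Farkas, that the polytope $\{w\ge 0: f(S)^2/n\le \chi(S)^{\top}w\le f(S)^2\ \forall S\}$ is nonempty. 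Your preliminary observations are sound: the Cauchy--Schwarz motivation is correct, and the per-$S$ certificate $w^S_i=w_{j^*(S),i}^2$ does lie in the upper polytope and does achieve value $\ge f(S)^2/n$ on $S$, for exactly the reasons you give.

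The gap is in the Farkas step, and it is larger than a bookkeeping issue. First, the certificate direction is wrong: for the system $\chi(S)^{\top}w\le f(S)^2$, $\chi(S)^{\top}w\ge f(S)^2/n$, $w\ge 0$, infeasibility gives nonnegative $\alpha_S,\beta_S$ with $\sum_{S\ni i}\alpha_S\ge\sum_{S\ni i}\beta_S$ for every $i$ (not the reverse) together with $\sum_S\beta_S f(S)^2>n\sum_S\alpha_S f(S)^2$. As written, your certificate (take $\alpha\equiv 0$, $\beta\equiv 1$) exists unconditionally, so it cannot be the object you contradict. After fixing the sign, the fractional cover you can build is of each $T$ with $\beta_T>0$ by $\lambda^T_{T'}=\alpha_{T'}/\beta_T$, giving $\beta_T f(T)\le\sum_{T'}\alpha_{T'}f(T')$ for all such $T$. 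But this inequality is homogeneous of degree one in $f$, while the thing you must contradict, $\sum_T\beta_T f(T)^2>n\sum_T\alpha_T f(T)^2$, is quadratic, and I do not see how to bridge that by Cauchy--Schwarz: the obvious manipulations introduce uncontrolled sums like $\sum_T f(T)$. This is precisely where the paper leans on John's theorem, which is a genuinely geometric ingredient and is what produces the $\sqrt{n}$ rather than some weaker factor; fractional subadditivity by itself is too ``degree-one'' a tool to manufacture the quadratic bound. So while your warm-up LP is correct and the polytope is the right object, the crux of the proof — simultaneous satisfaction of all the floor constraints by one $w$ — is not yet established, and I suspect it cannot be finished along the fractional-cover-plus-Cauchy--Schwarz line you propose.
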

\begin{proof}
\XOS\ valuations are known~\cite{Feige06OnMaximizingWwUFaS} to be equivalent
to fractionally subadditive valuations.
A function $f:2^{[n]}\to \reals$ is called \emph{fractionally subadditive} if $f(T) \leq \sum_{S} \lambda_S f(S)$ whenever
$\lambda_S \geq 0$ and $\sum_{S:s \in S} {\lambda_S} \geq 1$ for any $s\in T$.

We can show the following property of \XOS\ valuations: 
for any \XOS\ $f$ we have $f(T)=\max{ \{\sum_{i\in T}x_i| x \in  P(f)\}}$, where $P(f)$ is the associated polyhedron $\{x\in \mathbf{R}_{+}^{n}:\sum_{i\in S}x_i\le f(S),\forall S\subseteq [n]\}$.
Informally, this result states that one recovers $f(T)$ when optimizing in the direction given by $T$ over the polyhedron $P(f)$ associated with $f$.
The proof of this result involves a pair of dual linear programs, one corresponding to the maximization and another one that is tailored for fractional subadditivity, with an optimal objective value of  $f(T)$.
Formally,
for any $T\subseteq [n]$ we have $\sum_{i\in T}x_i\le f(T)$ for any $x\in P(f)$. Therefore $f(T)\ge \max{ \{\sum_{i\in T}x_i| x \in  P(f)\}}$. Now we prove that in fact $$f(T)\le \max{ \{\sum_{i\in T}x_i| x \in  P(f)\}}.$$
Consider the linear programming (LP1) for the quantity $\max{\{x(T)| x \in  P(f)\}}$ and its dual (LP2): we assign a dual variable $y_S$ for each constraint in (LP1), and we have a constraint corresponding to each primal variable indicating that the total amount of dual corresponding to a primal variable should not exceed its coefficient in the primal objective.

\begin{minipage}[h]{0.45\textwidth}
\begin{eqnarray}
    \max\sum_{i\in T}x_i & & \mathrm{(LP1)}\nonumber \\
   s.t. \sum_{i\in S}x_i \le f(S) & & \forall S \subseteq [n],\nonumber \\
    x_i \geq 0 & & \forall i \in [n].\nonumber
\end{eqnarray}
\end{minipage}
\begin{minipage}[h]{0.45\textwidth}
\begin{eqnarray}
    \min \sum_{S\subseteq [n]}y_Sf(S) &&  \mathrm{(LP2)}\nonumber \\
     s.t. \sum_{S:i \in S}y_S \ge 1 & & \forall i \in T, \nonumber \\
     y_S \geq 0 & & \forall S \subseteq [n]. \nonumber
\end{eqnarray}
\end{minipage}

\medskip
The classical theory of linear optimization gives that the optimal primal solution equals the optimal dual solution. 
Let $y^*$ be an optimal solution of (LP2). Therefore $$\sum_{S\subseteq [n]}y^*_Sf(S)=\max{\{x(T)| x \in  P(f)\}}.$$ Since $f$ is fractionally subadditive and $\sum_{S:i \in S}y^*_S \ge 1, \forall i \in T$, we have $f(T)\le \sum_{S\subseteq [n]}y^*_Sf(S),$ hence $f(T)\le \max{\{x(T)| x \in  P(f)\}}.$ This completes the proof of the fact that $f(T)=\max{ \{\sum_{i\in T}x_i| x \in  P(f)\}}$.

Given this result, 
we proceed as follows
(a very similar approach is used by~\cite{nick09} for submodular functions).
Define 
$P=\{x\in \mathbf{R}^{n}: (|x_1|,...,|x_n|) \in P(f)\}$. Since $P$ is bounded and \emph{central symmetric} (i.e. $x\!\in\! P \Leftrightarrow -x\!\in\! P$),  there exists~\cite{John} an ellipsoid $\mathcal{E}$ containing $P$ such that $\frac{1}{\sqrt{n}}\mathcal{E}$ is contained in $P$. Hence for $\hf(T)=\max\{\sum_{i\in T}x_i:x \!\in\! \frac{1}{\sqrt{n}}\mathcal{E}\}$, we have $\hf(T) \!\le\! f(T) \!\le\! \sqrt{n}\hf(T), \forall T\subseteq [n]$. At last, basic calculus implies $\hf(T)=\sqrt{ w \transpose \ind(T) }$ for some  $w \in \bR^n_+$.
\end{proof}

For PMAC-learning \XOS\ valuations to with an approximation factor of $\sqrt{n+\eps}$,
we apply Algorithm~\ref{alg-lin-sep} with parameters $R=n$, $\epsilon$, and $p=2$. 
The proof of correctness of  Algorithm~\ref{alg-lin-sep} follows by using the structural result in Claim~\ref{Satoru} and a technique of \cite{BalcanH10LearningSF} that we sketch briefly here. Full details of this proof appear in Appendix~\ref{appendix-lin-sep}.

Assume first that $\targetf(S) > 0 $ for all $S \neq \emptyset$.
The key idea is that  Claim~\ref{Satoru}'s structural result implies that the following examples in $\bR^{n+1}$ are linearly separable since
$n w\transpose\ind(S) - \targetfpower{S} \geq 0$ and  $n w\transpose\ind(S) - (n+\epsilon) \targetfpower{S} < 0$.
\newcommand{\exap}{\mathrm{ex}^+_S}
\newcommand{\exam}{\mathrm{ex}^-_S}
$$
\begin{array}{lll}
\text{Examples labeled $+1$:}& ~~~ \exap := (\ind(S), \targetfpower{S}) &\quad\forall S \subseteq [n] \\
\text{Examples labeled $-1$:}& ~~~ \exam := (\ind(S), (n+\epsilon) \cdot \targetfpower{S}) &\quad\forall S \subseteq [n]
\end{array}
$$
This suggests trying to reduce our learning problem to the standard problem
of learning a linear separator for these examples in the standard PAC model~\cite{KV:book94,Vapnik:book98}.
However, in order to apply standard techniques to learn such a linear separator,
we must ensure that our training examples are i.i.d.
To achieve this, we create a i.i.d.\ distribution $D'$ in $\bR^{n+1}$
that is related to the original distribution $D$ as follows.
First, we draw a sample $S \subseteq [n]$ from the distribution $D$ and then flip a fair coin for each.
The sample from $D'$ is labeled $\exap$ i.e. $+1$ if the coin is heads and $\exam$ i.e.  $-1$ if the coin is tails.
As mentioned above, these labeled examples are linearly separable in $\bR^{n+1}$.
Conversely, suppose we can find a linear separator that classifies most of the examples coming from
$D'$ correctly.
Assume that this linear separator in $\bR^{n+1}$
is defined by the function $u \transpose x = 0$,
where $u = (\hat{w},-z)$, $w \in \bR^n$ and $z > 0$.
The key observation is that the function $f(S) = \frac{1}{(n+\epsilon)z} \hat{w} \transpose \ind(S)$
approximates $\targetfpower{\cdot}$ to within a factor $n+\epsilon$ on most of the points coming from $D$.

If $\targetf$ is zero 
 on non-empty sets,
then we can learn its set  $\cZ = \setst{ S }{ \targetf(S) = 0 }$ of zeros quickly since
$\cZ$ is closed to union and taking subsets for any subadditive $\targetf$.
In particular, suppose that there is at least an $\epsilon$ chance that a new example is a zero of $\targetf$,
but does not lie in the null subcube over the sample.
Then such a example should be seen in the next sequence of $\log(1/\delta)/\epsilon$ examples,
with probability at least $1-\delta$.
This new example increases the dimension of the null subcube by at least one,
and therefore this can happen at most $n$ times.

To establish learnability for the class of subadditive valuations, we note that any subadditive valuation can be approximated by an \XOS\ valuation to a
$\ln n$ factor~\cite{Dobz,BhawalkarR11WelfareGfCAwIB}
    \footnote{We are grateful to Shahar Dobzinski and Kshipra Bhawalkar for pointing out this fact to us.} and so, by Claim~\ref{Satoru}, any subadditive valuation is approximated to a $\sqrt{n} \ln n$ factor by a linear function.
This then implies that we can use Algorithm~\ref{alg-lin-sep} with parameters $R = n \ln^{2} n$, $\epsilon$, and $p = 2$.
Correctness then follows by a reasoning similar to the one for \XOS\ functions.
\end{proofsketch} 

\vspace{-0.2cm}
\begin{algorithm}[th]
{\bf Input:} Parameters: $R$, $\epsilon$ and $p$. Training examples $\cS= \!\left\{(S_1,\targetf(S_1)),
\ldots, (S_{\ell},\targetf(S_{\ell})) \right\}$.
\vspace{-0.2cm}
\begin{itemize}
\setlength{\parsep}{0pt}
\setlength{\itemsep}{0pt}
\setlength{\leftmargin}{0pt}
\item Let $\cSnz=\left\{(A_i,\targetf(A_i)) \!\in\! \cS\!: \targetf(A_i) \!\neq\! 0 \right\} \!\subseteq\! \cS$
the examples with non-zero values, \ensuremath{\cSz=\cS \setminus \cSnz} and
$$\cU_{0} = \cup_{{l \leq \ell; \targetf(S_l)=0}} {S_l}.$$

\item For each $i$ in $\{1, \dots, |\cSnz|\}$
let $y_i$ be the outcome of independently flipping a fair $\set{+1,-1}$-valued coin.

Let $x_i \in \bR^{n+1}$ be the point defined by
$
x_i ~=~ \begin{cases}
(\ \ind(A_i), (\targetf(A_i))^p\ ) &\quad\text{(if $y_i=+1$)} \\
(\ \ind(A_i), (R+\epsilon) \cdot (\targetf(A_i))^p\ ) &\quad\text{(if $y_i=-1$)}.
\end{cases}
$
\item  Find a linear separator $u = (\hat{w},-z) \in \bR^{n+1}$, where $\hat{w} \in \bR^n$ and $z > 0$,
such that $(x,\sgn(u\transpose x))$ is consistent with the labeled examples $(x_i,y_i) ~\forall i \in \{1, \dots, |\cSnz|\}$,
and with the additional constraint that $\hat{w}_j=0 ~\forall j \in \cU_{0}$.
\end{itemize}
\vspace{-0.2cm}
{\bf Output:} The function $f$ defined as $f(S) = \left(\frac{1}{(R+\epsilon)z}\: \hat{w} \transpose \ind(S)\right)^{1/p}$.
\caption{Algorithm for PMAC-learning  via a reduction to a binary linear separator problem.
}
\label{alg-lin-sep}
\end{algorithm}

\subsection{Better learnability results for \XOS\ valuations with polynomial complexity}
\label{sec:XOS-Poly-Trees}
In this section we consider the learnability of \XOS\ valuations representable with a polynomial number of trees.
Since this class has small complexity, it is easy to see that it is learnable in principle from a small sample size if we did not care about computational complexity.
Interestingly we can show that we can achieve good PMAC learnability via polynomial time algorithms. In particular, we show that \XOS\ functions representable with at most $R$ \OR trees can be PMAC-learned with a $R^{\eta}$ approximation factor in time $n^{O(1/\eta)}$, for any  $\eta > 0$.
This improves the approximation factor of Theorem~\ref{stmt:XOS-upper-bound} for all such \XOS\ functions.
Moreover, this implies that
\XOS\ valuations representable with a polynomial number of trees can be PMAC-learned within a factor of $n^{\eta}$, in time $n^{O(1/\eta)}$, for any $\eta > 0$.

\begin{Theorem}
For any $\eta>0$, the class of \XOS\ functions representable with at most $R = n^{O(1)}$ \OR trees
is PMAC-learnable in time $n^{O(1/\eta)}$ with approximation factor of $(R+\eps)^{\eta}$ by using
$O\left(\frac{n^{1/\eta}}{\epsilon} \left[\frac{\log(n)}{\eta} + \log\left(\frac{1}{\delta \epsilon}\right)\right] \right)$ training examples.
    \label{stmt:nToEpsXOSPMAC}
\end{Theorem}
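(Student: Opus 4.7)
The plan is to mirror the strategy of Theorem~\ref{stmt:XOS-upper-bound} but to strengthen its structural sandwich: instead of approximating $\targetf$ by the square root of a linear function (as in Claim~\ref{Satoru}), I will approximate $\targetf$ by the $L$-th root of a degree-$L$ multilinear polynomial in the indicators $\ind(S)_i$, with $L = \lceil 1/\eta \rceil$. The MAX-vs-SUM slack from having at most $R$ trees will then contribute only $R^{1/L} \le R^\eta$ after taking the $L$-th root, and the polynomial has a linear representation in a lifted feature space of dimension $n^{O(1/\eta)}$, so PMAC-learning reduces to PAC-learning a linear separator in that lifted space.

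Concretely, writing $\targetf(S) = \max_{1 \le j \le R} w_j\transpose \ind(S)$ with $w_j \in \bR_+^n$, I would raise both sides to the $L$-th power and apply the elementary sandwich $\max_j x_j^L \le \sum_j x_j^L \le R\max_j x_j^L$ for non-negative $x_j$, obtaining
\begin{equation*}
\targetf(S)^L ~\le~ \sum_{j=1}^{R} \bigl(w_j\transpose \ind(S)\bigr)^L ~\le~ R\cdot \targetf(S)^L.
\end{equation*}
Expanding each $(w_j\transpose \ind(S))^L$ by the multinomial theorem and using Boolean idempotence $(\ind(S)_i)^k = \ind(S)_i$ for $k \ge 1$ rewrites the middle sum as $\sum_{T \subseteq [n],\, 1 \le |T| \le L} a_T \prod_{i \in T} \ind(S)_i$ with non-negative coefficients $a_T \ge 0$. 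I would then define the lifted feature map $\phi : 2^{[n]} \to \{0,1\}^N$ whose $T$-coordinate is $\prod_{i \in T} \ind(S)_i$ (which equals $1$ when $T \subseteq S$ and $0$ otherwise), with $T$ ranging over non-empty subsets of $[n]$ of size at most $L$. Then $N = \sum_{k=1}^{L}\binom{n}{k} = n^{O(1/\eta)}$ and the structural inequality $\targetf(S)^L \le a\transpose \phi(S) \le R\,\targetf(S)^L$ holds for every $S$.

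With this structural result in hand, I would run a variant of Algorithm~\ref{alg-lin-sep} with parameters $R$ (the tree count), $\eps$, $p = L$, and with the lifted features $\phi(\cdot)$ in place of $\ind(\cdot)$. The inequality above certifies that the examples $(\phi(S),\,\targetf(S)^L)$ labeled $+1$ and $(\phi(S),\,(R+\eps)\targetf(S)^L)$ labeled $-1$ are linearly separable in $\bR^{N+1}$, so a separator $(\hat{a},-z)$ with $z > 0$ (and $\hat{a} \ge 0$, a constraint consistent with the true $a$) can be obtained by linear programming in time $\poly(N) = n^{O(1/\eta)}$. Classical VC-dimension bounds for halfspaces in $\bR^{N+1}$ deliver the sample complexity $O\!\left(\frac{n^{1/\eta}}{\eps}\bigl[\frac{\log n}{\eta} + \log\frac{1}{\delta\eps}\bigr]\right)$ stated in the theorem. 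The returned hypothesis $f(S) = \bigl(\hat{a}\transpose \phi(S)/((R+\eps)z)\bigr)^{1/L}$ then satisfies $f(S) \le \targetf(S) \le (R+\eps)^{1/L} f(S) \le (R+\eps)^\eta f(S)$ on a region of $D$-mass at least $1-\eps$ with confidence $1-\delta$, and sets on which $\targetf$ vanishes are handled exactly as in the proof of Theorem~\ref{stmt:XOS-upper-bound}, by tracking the null subcube $\cU_0$ from the sample and forcing $\hat{a}_T = 0$ whenever $T \cap \cU_0 \ne \emptyset$.

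The key step is the structural claim. Its proof is brief but crucially depends on the Boolean idempotence to keep the approximating polynomial \emph{multilinear}, pinning the lift's dimension to $O(n^L)$ rather than causing a multiset blowup; in parallel, the MAX-vs-SUM factor $R$ only shrinks to $R^\eta$ because $L$ has been chosen to match $\eta$ via $L = \lceil 1/\eta \rceil$. Once the structural claim is in place, the rest of the argument instantiates the template of Theorem~\ref{stmt:XOS-upper-bound} (reduction to PAC-learning a linear separator, generalization via the VC-dimension, and the null-subcube treatment of zeros), with $n$ replaced by $N = n^{O(1/\eta)}$.
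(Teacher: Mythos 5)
Your proposal is correct and follows essentially the same route as the paper: the same structural sandwich $\targetf(S)^L \le \sum_j (w_j\transpose\ind(S))^L \le R\,\targetf(S)^L$ with $L\approx 1/\eta$, the same lift to the $O(n^L)$-dimensional feature space of subsets of size at most $L$, and the same reduction to Algorithm~\ref{alg-lin-sep} with $p=L$. The only cosmetic difference is that you make the Boolean idempotence step explicit (keeping the lifted polynomial multilinear) whereas the paper leaves it implicit in its choice of features indexed by subsets of size at most $L$; both yield the identical feature map and the same complexity.
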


\begin{proof}
Let $L =  1/\eta$ and assume for simplicity that it is integer. We start by deriving a key structural result. We show that \XOS\ functions can be
approximated well by the $L$-th root of a degree-$L$ polynomial over $(\chi(S))_i$ for  $i
\in [n]$.
Let $T_{1}, \dots, T_{R}$ be the $R$ \OR trees in an \XOS\ representation $\cT$ of $\targetf$.
For a tree $j$ and a leaf in $T_{j}$ corresponding to an element $i \in [n]$,
let $w_{ji}$ the weight of the leaf.
For any set $S$, let $k_{j}(S) = \sum_{i \in T_{j} \cap S} w_{ji} =
w_{j}^{T} \ind(S)$ be the sum of weights in  tree $T_{j}$ corresponding to leaves in $S$.
$k_{j}(S)$ is the value assigned to set $S$ by tree $T_{j}$.
Note that $\targetf(S) = \max_{j} k_{j}(S)$, i.e. the maximum value of any tree, from the definition of \XOR.
We define valuation $f'$ that averages the $L$-th powers of the values of all trees:
~$
f'(S) = 1/R\sum\nolimits_{j} k_{j}^{L}(S), \:\forall S \subseteq [n].
$~
We claim that $f'(\cdot)$ approximates $(\targetf(\cdot))^{L}$ to within an $R$ factor on all sets $S$, namely
\begin{eqnarray}
f'(S) \leq (\targetf(S))^{L} \leq R f'(S), ~\forall S \subseteq [n]
\label{eq:XOSApproxTargetf'}
\quad \text{i.e. }  \quad
\textstyle 1/R \sum\nolimits_{j} k_{j}^{L}(S) \leq \max\nolimits_{j} k_{j}^{L}(S) \leq \sum\nolimits_{j} k_{j}^{L}(S), ~\forall S \subseteq [n]
\label{eq:XOSApproxTargetf'kj}
\end{eqnarray}
The left-hand side inequalities in Eq.~\eqref{eq:XOSApproxTargetf'kj} follow as $\targetf$ has at most $R$ trees  and $k_{j'}^{L} (S) \!\leq\! \max_{j} k_{j}^{L} (S)$ for any tree $T_{j'}.$ The right-hand side inequalities in Eq.~\eqref{eq:XOSApproxTargetf'kj} follow immediately.

This structural result suggests re-representing each set $S$ by a new set of $\Theta(n^L)$ features, with one feature for each subset of $[n]$ with at most $L$ items. Formally, for any set $S \subseteq [n]$,
we denote by $\ind_M(S)$ its feature representation over this new set of features. $\ind_M(S)_{i_1, i_2, \ldots, i_L}=1$ if all items  $i_1, i_2, \ldots i_L$ appear in $S$ and $\ind_M(S)_{i_1, i_2, \ldots, i_L}=0$ otherwise.
It is easy to see that $f'$ is representable as a linear function over this new set of features. This holds for each $k_{j}^{L}(S) = (w_{j}^{T} \ind(S))^{L}$ due to  its multinomial expansion, that contains one term for each set of up to $L$ items appearing in tree $T_{j}$, i.e. for each such feature. Furthermore, $f'$ remains linear when the terms for each tree $T_{j}$ are added.

Given this, we can now use a variant of Algorithm~\ref{alg-lin-sep} with parameters $R$, $\epsilon$, and $p \!=\! L$ and to prove correctness we can use a reasoning similar to the one in Theorem~\ref{alg-lin-sep-first}.
Any sample $S_{l}$ is fed into Algorithm~\ref{alg-lin-sep}
as $(\ind_M(S_{l}), (\targetf(S_{l}))^{L})$ or
$( \ind_M(S_{l}), (R+\epsilon) \cdot(\targetf(S_{l}))^{L})$ respectively.
Since $f'$ is linear over the set of features, Algorithm~\ref{alg-lin-sep}
outputs
with probability at least $1-\delta$ a hypothesis $f''$ that approximates $\targetf$ to an $(R+\eps)^{1/L}$ factor on
any point  $\ind_M(S)$ corresponding to sets $S \subseteq [n]$ from a collection $\mathcal{S}$ with at  least an $1-\eps$ measure in $D$, i.e.
 $f''(\ind_M(S)) \leq \targetf(S) \leq (R+\eps)^{1/L} f''(\ind_M(S))$. 
We can output then hypothesis $f(S) = f''(\ind_M(S)), \forall S \subseteq [n]$,   defined on the initial ground set $[n]$ of items, that  approximates $\targetf(\cdot)$ well, i.e. for any $S \in \mathcal{S}$ we have
\[
f(S) = f''(\ind_M(S)) \leq
\targetf(S) \leq
(R+\eps)^{1/L} f''(\ind_M(S)) = (R+\eps)^{1/L} f(S)
\]
As desired, with high confidence the hypothesis $f$ approximates $\targetf$ to a $(R+\eps)^{\eta}$ factor on most sets from $D$.
\end{proof}

This result has an appealing interpretation in terms of representations of submodular functions.
We know that any submodular function is representable as an \XOS\ tree.
What Theorem~\ref{stmt:nToEpsXOSPMAC} implies is that (submodular) functions that are succinctly representable as \XOS\ trees can be PMAC-learned well.
Theorem~\ref{stmt:nToEpsXOSPMAC} is thus a target-dependent learnability result, in that the extent of learnability of a function depends on the function's complexity.
\label{sec:XOS-R-Leaves}

\subsection{Better learnability results for \XOS\ valuations with small \OR trees}
In this section we consider the learnability of another interesting subclass of \XOS\ valuations, namely \XOS\ valuations representable with ``small'' \OR trees and show learnability to a better factor than that in Theorem~\ref{stmt:XOS-upper-bound}. For example, consider a traveler deciding between many trips, each to a different location with a small number of tourist attractions.
The traveler has an additive value for several attractions at the same location.
This valuation can be represented as an \XOSLong\  function where each \OR tree  stands for a location and has a small number of leaves.
We now show good  PMAC-learning  guarantees for classes of functions of this type.

\vspace{-0.5\baselineskip}
\begin{Theorem}
For any $\eta>0$,
the class of \XOS\ functions representable with \OR trees with at most $R$ leaves
is properly PMAC-learnable with approximation factor of $R(1+\eta)$ by using
$m=O(\frac{1}{\epsilon} \left( n \log \log_{1+\eta} (\frac{H}{h})
+ \log(1/\delta) \right))$ and running time polynomial in $m$,
where $h$ and $H$ are the smallest and the largest non-zero values our functions can take.
\label{stmt:XOS-R-Leaves}
\end{Theorem}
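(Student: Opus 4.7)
My plan is to reduce the problem to properly learning a discretized class of unit-demand functions, via a structural approximation lemma combined with an Occam's-razor argument.

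\textbf{Structural lemma.} First I would show that any \XOS\ function $\targetf(S)=\max_j w_j\transpose\ind(S)$ whose SUM trees each have at most $R$ leaves is approximated within factor $R$ by a unit-demand function. Setting $v_i^*=\max_j w_{ji}$ and $g^*(S)=\max_{i\in S} v_i^*$, the lower bound $\targetf(S)\ge g^*(S)$ is immediate from the defining max over trees, and the upper bound $\targetf(S)\le R\,g^*(S)$ follows because each tree's value $k_j(S)$ is a sum of at most $R$ terms, each bounded above by $g^*(S)$.

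\textbf{Discretized hypothesis class and algorithm.} Next I would work over the weight grid $V=\{0\}\cup\{h(1+\eta)^k:0\le k\le \lceil\log_{1+\eta}(H/h)\rceil\}$ and take $\mathcal{H}$ to be the class of unit-demand functions with weights in $V$, so $\log|\mathcal{H}|=O(n\log\log_{1+\eta}(H/h))$. A unit-demand function is \XOS\ with singleton SUM trees, so $\mathcal{H}$ sits inside the target class and the learning is \emph{proper}. The algorithm is then simple: for each item $i$, set $v_i$ to be the largest element of $V$ satisfying $v_i\le\targetf(S)$ for every training sample $S$ containing $i$, and output the hypothesis $g(S)=\max_{i\in S}v_i$. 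This runs in time $O(nm)$.

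\textbf{Consistency on the training sample and generalization.} By construction, $g(S)\le\targetf(S)$ on every training sample. For the matching upper bound I would fix a training $S$, pick $i^*\in S$ attaining $\max_{i\in S}v_i^*$, and round $v_{i^*}^*$ down in $V$ to $\tilde v_{i^*}\ge v_{i^*}^*/(1+\eta)\ge \targetf(S)/(R(1+\eta))$, using the structural lemma for the second inequality. Since $v_{i^*}^*\le\targetf(S')$ for every $S'\ni i^*$, the greedy rule ensures $v_{i^*}\ge\tilde v_{i^*}$, so $g(S)\ge \targetf(S)/(R(1+\eta))$. Then the standard Occam bound for a finite hypothesis class gives generalization within ratio $R(1+\eta)$ on a $1-\epsilon$ fraction of $D$ with probability $\ge 1-\delta$, using $m=O(\epsilon^{-1}(\log|\mathcal{H}|+\log(1/\delta)))$ samples, matching the stated complexity.

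\textbf{Main obstacle.} The delicate point will be matching the approximation constant exactly to $R(1+\eta)$ rather than something looser like $R(1+\eta)^2$. This works because rounding need only be applied at the single argmax item $i^*$, and the per-item greedy rule leverages the \emph{global} upper bound $v_{i^*}^*\le\targetf(S')$ valid for all $S'\ni i^*$, so that for every item that attains the max on some set, the algorithm recovers a weight at least as large as its rounded target weight.
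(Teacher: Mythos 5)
Your proof is correct and takes essentially the same route as the paper: the same $R$-factor structural approximation of the target by a unit-demand function, the same item-wise greedy hypothesis construction from the sample, and the same finite-cover/Occam generalization bound over a multiplicative weight grid of size $(\log_{1+\eta}(H/h))^n$. The only small difference is that you output a hypothesis directly in the discretized class, giving $R(1+\eta)$ cleanly, whereas the paper first builds a continuous hypothesis $f(i)=\min_{j:i\in S_j}\targetf(S_j)$ (Algorithm~\ref{alg-XOR}) and then invokes the cover $F_\eta$ afterward, which is why its intermediate bound reads $R(1+\eta)^2$.
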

\begin{proof}
 We show that the unit-demand 
 hypothesis $f$ output by
Algorithm~\ref{alg-XOR} produces the desired result.  The algorithm constructs a unit demand hypothesis function $f$ as follows. For any $i$ that appears in at least one set $S_j$ in the sample we define $f(i)$ as the smallest value $\targetf(S_j)$ over all the sets $S_j$ in the sample containing $i$. For $i$ that does not appear in any set $S_j$ define $f(i) = 0$. 

We start by proving a key structural result showing that $f$ approximates the target
function multiplicatively within a factor of $R$ over the sample.
That means:
\begin{eqnarray} f(S_l) \leq \targetf(S_l) \leq R
f(S_l)~~~\mathrm{~~for~~all~~} l \in \{1,2, \ldots,
m\}.\label{approx-sample-RORPMAC}\end{eqnarray}
To see this note that for any $i \in S_l$ we have $\targetf(i) \leq
\targetf(S_l)$, for $l \in \{1,2, \ldots, m\}$.
 So \begin{eqnarray}f(i) \geq \targetf(i) ~~\mathrm{~for~any~} i \in
S_1 \cup \ldots \cup S_m. \label{RORPMAC:one}\end{eqnarray}
Therefore for any $l \in \{1,2, \ldots, m\}$. :
$$\targetf(S_l) \leq R \max_{i \in S_l} \targetf(i) \leq R \max_{i \in
S_l} f(i)= R f(S_l),$$
where the first inequality follows by definition, and the second
inequality follow from relation~(\ref{RORPMAC:one}).
By definition, for any $i \in S_l$, $f(i) \leq \targetf(S_l)$. Thus,
$f(S_l) = \max_{i \in S_l} f(i) \leq \targetf(S_l)$. These together
imply relation~(\ref{approx-sample-RORPMAC}), as desired.

To finish the proof we show that  $\ell=O(\frac{1}{\epsilon} \left( n \log \log_{1+\eta} (\frac{H}{h}) + \log(1/\delta) \right))$ is sufficient so that  with probability at least $1-\delta$ $f$
approximates the target function $\targetf$ multiplicatively within a factor of $R(1+\eta)^2$ on a $1-\epsilon$ fraction of the distribution.
Let $F_{\eta}$ be the class of unit-demand functions that assign to each individual leaf a power of $(1+\eta)$ in $[h,H]$. Clearly $|F_{\eta}|=(\log_{1+\eta} (\frac{H}{h}))^n$.
 It is easy to see that $\ell=O(\frac{1}{\epsilon} \left( n \log \log_{1+\eta} (\frac{H}{h}) + \log(1/\delta) \right))$ examples are sufficient such that any function in $F_{\eta}$ that approximates the target function on the sample multiplicatively within a factor of $R(1+\eta)$ will with probability at least $1-\delta$
approximate the target function multiplicatively within a factor of $R(1+\eta)$ on a $1-\epsilon$ fraction of the distribution. Since $F_{\eta}$  is a multiplicative $L_{\infty}$ cover for the class of unit-demand functions, we easily get the desired result~\cite{AB99}.
\end{proof}

\vspace{-0.2cm}
\begin{algorithm}
{\bf Input:} A sequence of  training examples $\cS= \left\{(S_1,\targetf(S_1)),
(S_2,\targetf(S_2)), \ldots (S_{\ell},\targetf(S_{\ell})) \right\}$.
\begin{itemize}
\item Set~ $ f(i) = \min_{j: i \in S_j} \targetf(S_j)$~ if $i \in  \cup_{l=1}^m S_l$
~and~ $f(i) = 0$~ if $i \not\in \cup_{l=1}^m S_l$.
\end{itemize}
{\bf Output:} The unit-demand valuation $f$ defined by ~$f(S) = \max_{i \in S} f(i)$~ for any $S \subseteq \{1, \ldots, n\}$.
\caption{~Algorithm for PMAC-learning interesting classes of \XOS\ and \OXS\ valuations. 
}
\label{alg-XOR}
\end{algorithm}
\vspace{-0.2cm}

\section{PMAC-learnability of \OXS\ and Gross Substitutes Valuations}
\label{sec:GS-OXS}
In this section we
study the learnability of subclasses of  submodular valuations, namely \OXS\ and gross substitutes.
We start by focusing on interesting subclasses of \OXS\ functions that arise in practice, namely \OXS\ functions representable with a small number of \XOR trees or leaves\footnote{ We note that the literature on algorithms for secretary problems~\cite{BabaioffDGIT09SecretaryPWaD,secretary} often considers a subclass of the latter class,
in which each item must have the same value 
in any tree.}.
For example, 
a traveler  presented with a collection of
  plane tickets, hotel rooms, and rental cars for a given location
  might value the bundle as the sum
  of his values on the best ticket, the best hotel room, and the best
  rental car. This valuation is \OXS, with one \XOR tree for each travel
 requirement.  The number of \XOR trees, i.e. travel requirements, is small but the number of leaves in each tree may be large. 
 As another example, consider for example a company producing airplanes that must procure many
 different components for assembling an airplane.
 The number of suppliers for each component is small, but the number of
 components may be very large (more than a million in today's
 airplanes).
 The company's value for a set of components of the same type, each
 from a different supplier, is its highest value for any such
 component.
 The company's value for a set of components of different types is the
 sum of the values for each type.
 This valuation  is representable as an \OXSLong,
 with one tree for each  component type. The number of leaves,
 i.e. suppliers, in each \XOR tree is small but there may be many such trees.
In this section, we show good  PMAC-learning  guarantees for classes of functions of these types. Formally:

\vspace{-0.2\baselineskip}
\begin{Theorem}
\textbf{(1)} Let $\cF$ be the family of \OXS\ functions representable with at most $R$  \XOR trees.
For any $\eta$, the family
$\cF$ is properly PMAC-learnable with approximation factor  of $R(1+\eta)$  by using $m=O(\frac{1}{\epsilon} \left( n \log \log_{1+\eta} (\frac{H}{h}) + \log(1/\delta) \right))$ training examples and running time polynomial in $m$, where $h$ and $H$ are the smallest and the largest value our functions can take.
For constant $R$, the class $\cF$ is PAC-learnable by using $O(n^R \log(n/\delta)/\eps)$ training examples and running time $\poly(n,1/\epsilon,1/\delta)$.
\label{stmt:RXORPMAC}
\label{stmt:OXS-R-trees}
\label{stmt:RXORPMACconst}

\textbf{(2)} For any $\epsilon>0$, the class of \OXS\ functions representable with  \XOR\ trees  with at most $R$ leaves is PMAC-learnable with approximation factor $R+\epsilon$  by using $O(\frac{n}{\epsilon} \log \left(\frac{n}{\delta \epsilon} \right))$ training examples and running time $\poly(n,1/\epsilon,1/\delta)$.
\label{stmt:OXS-R-Leaves}
\label{stmt:OXS-R}
\end{Theorem}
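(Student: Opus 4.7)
My plan is to prove both parts using the same template as the XOS-with-$R$-leaves result (Theorem~\ref{stmt:XOS-R-Leaves}): establish a one-sided structural approximation of $\targetf$ by a unit-demand valuation within factor $R$, use Algorithm~\ref{alg-XOR} as the learner (it outputs the pointwise-largest unit-demand hypothesis $\hf$ whose per-item weights respect all sample upper bounds), and close with a standard uniform-convergence argument, discretizing into a $(1+\eta)$-multiplicative cover when needed. The genuinely new ingredient in each part is the structural claim; once that is in place the remainder essentially re-runs the analysis of Theorem~\ref{stmt:XOS-R-Leaves}.

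For Part~(1), the structural claim I would prove is that for any OXS $\targetf$ with at most $R$ trees the unit-demand $g$ defined by $g(i) := \targetf(\{i\}) = \max_j w_{ji}$ satisfies
\[
g(S) \;\le\; \targetf(S) \;\le\; R\cdot g(S) \qquad \forall\, S \subseteq [n].
\]
The lower inequality follows by placing the best item of $S$ alone in its best tree; the upper inequality uses that any OXS partition has at most $R$ nonempty parts and each part contributes at most $\max_{i\in S} g(i) = g(S)$. Algorithm~\ref{alg-XOR}'s hypothesis $\hf$ satisfies $\hf(S_l)\le\targetf(S_l)$ by construction, and $\hf(i)\ge\targetf(\{i\})=g(i)$ by monotonicity (the min defining $\hf(i)$ is over supersets of $\{i\}$), so $\hf(S_l)\ge g(S_l)\ge\targetf(S_l)/R$ on the sample. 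Rounding $\hf$'s weights up to the nearest power of $1+\eta$ in $[h,H]$ puts it in a cover $F_\eta$ of size $(\log_{1+\eta}(H/h))^n$, and a standard uniform-convergence argument then carries the factor-$R(1+\eta)$ approximation to a $(1-\eps)$ fraction of $D$ with confidence $1-\delta$ at the stated sample size. For the constant-$R$ PAC sub-claim, $Rn$ real parameters and only $O(n^R)$ combinatorial ``matching types'' per sample make $O(n^R\log(n/\delta)/\eps)$ samples suffice for uniform convergence, and polynomial-time ERM is realized by enumerating the optimal matching choice for each sample and solving a feasibility LP in the $Rn$ weights.

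For Part~(2), the same template applies once the analogous structural bound is established: a unit-demand valuation approximates any OXS $\targetf$ within factor $R$ when each XOR tree has at most $R$ leaves. Because the structural factor is already exactly $R$, the $(1+\eta)$-discretization can be skipped and generalization goes through directly via the $O(n)$ pseudo-dimension of unit-demand valuations, yielding the $R+\eps$ approximation with the simpler sample size $O\bigl(\tfrac{n}{\eps}\log\tfrac{n}{\delta\eps}\bigr)$. The main obstacle I anticipate is the one-sidedness of Algorithm~\ref{alg-XOR}: the inequality $\hf\le\targetf$ is enforced only on the sample, and a test set $S$ could in principle satisfy $\hf(S)>\targetf(S)$ if some $\hf(i)$ with $i\in S$ was pinned by a much larger sample set $S_l\ni i$. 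The stated sample complexity is calibrated precisely so that, with probability $1-\delta$, every item $i$ with non-negligible $D$-marginal is covered by tight-enough samples to force $\hf(i)$ close to $\targetf(\{i\})=g(i)$, which is what lifts the sample-level factor-$R$ approximation to a $(1-\eps)$-fraction of $D$.
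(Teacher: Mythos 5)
Your Part~(1) structural argument (OXS with at most $R$ XOR trees is approximated within factor $R$ by the unit-demand $g(i)=\targetf(\{i\})$, since any OXS partition has at most $R$ nonempty parts each contributing at most $g(S)$) is correct, and is a more direct route to the same conclusion the paper reaches by first rewriting an OXS with $\le R$ XOR trees as an XOS with $\le R$ leaves per OR tree and then invoking Theorem~\ref{stmt:XOS-R-Leaves}. The two routes are essentially interchangeable there.

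Your Part~(2) argument, however, has a genuine gap: the structural claim that ``a unit-demand valuation approximates any OXS $\targetf$ within factor $R$ when each XOR tree has at most $R$ leaves'' is false. Take $\targetf$ to be the additive valuation with all unit weights, which is OXS with $n$ XOR trees, each a single leaf (so $R=1$). Then $\targetf([n])=n$, yet any unit-demand $g$ with $g\le\targetf$ satisfies $g([n])=\max_i g(i)\le\max_i\targetf(\{i\})=1$, so the ratio is $n$, not $R=1$. The two subclasses in Parts~(1) and~(2) are not symmetric: bounded leaves per tree caps the number of summands only under a MAX root (the XOS case of Theorem~\ref{stmt:XOS-R-Leaves}), not under the SUM root of OXS, where the number of trees---and hence summands---can be $\Theta(n)$. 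The paper's proof of Part~(2) uses a different structural result, that the \emph{linear} function $f(S)=\sum_{i\in S}\targetf(\{i\})$ satisfies $\targetf(S)\le f(S)\le R\cdot\targetf(S)$, and then learns via the linear-separator reduction (Algorithm~\ref{alg-lin-sep} with $p=1$), not via Algorithm~\ref{alg-XOR}. Your ``same template'' plan therefore breaks at the structural step; you need the linear (not unit-demand) hypothesis class and the corresponding separator algorithm.

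A secondary concern: your constant-$R$ PAC claim proposes ERM by ``enumerating the optimal matching choice for each sample and solving a feasibility LP.'' The matching choices across samples are coupled, so naive enumeration is exponential in the sample size, and the ``some matching achieves value $v$'' constraint is a disjunction, not an LP constraint. The paper sidesteps this entirely by observing that an OXS valuation with $\le R$ trees is determined by its values on subsets of size $\le R$, encoding it as a unit-demand valuation over $O(n^R)$ meta-items, and applying Algorithm~\ref{alg-XOR} and Lemma~\ref{stmt:XS-1} in the meta-item space.
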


\begin{proof}[Proof sketch]
\textbf{(1)} We can show that a function $f$ with  an \OXS\ representation $\cT$ with at most $R$ trees can also be represented as an \XOS\ function with at most $R$ leaves per tree.
Indeed, for each tuple of leaves, one from each tree in $\cT$,
we create an \OR tree with these leaves. The \XOS\ representation of $\targetf$ is the \XOR of all these trees.
Given this the fact that $\cF$ is learnable to a factor of of $R(1+\eta)$ for any $\eta$ follows from Theorem~\ref{stmt:XOS-R-Leaves}.

We now show that when $R$ is constant  the class $\cF$ is PAC-learnable.
First, using a similar argument to the one in Theorem~\ref{stmt:XOS-R-Leaves} we can show that
 Algorithm~\ref{alg-XOR} can be used to PAC-learn any unit-demand valuation  by using
    $
    \ell = O(n \ln(n/\delta)/\eps) $ training
    examples and time $\poly(n,1/\epsilon,1/\delta)$ -- see Lemma~\ref{stmt:XS-1} in Appendix~\ref{app:OXS-R}.
Second, it is easy to see that
an \OXS\ function $\targetf$ representable with at most $R$ trees can also be represented as
a unit-demand with at most $n^R$ leaves, with $R$-tuples as items (see Lemma~\ref{stmt:OXS-R-trees-meta-items} in Appendix~\ref{app:OXS-R}).
These two facts together imply that
for constant $R$, the class $\cF$ is PAC-learnable by using $O(n^R \log(n/\delta)/\eps)$ training examples and running time $\poly(n,1/\epsilon,1/\delta)$.

\textbf{(2)}
We start by showing the following structural result: if $\targetf$ has an \OXS representation with at most $R$ leaves in any \XOR\ tree, then it can be approximated by a linear function within a factor of $R$ on every subset of the ground set. In particular,
the linear function $f$ defined as $f(S) = \sum_{i \in S} \targetf(i)$, for all $S \subseteq \{1 \ldots n\}$ satisfies
\begin{eqnarray}
\targetf(S) \leq f(S) \leq R \cdot \targetf(S)~~~~\mathrm{for~~all~~}~~~~ S \subseteq \{1 \ldots n\}
\label{structuralOXSRleaves}
\end{eqnarray}
By subadditivity, $\targetf(S) \leq R f(S)$, for all S.
Let $\targetf_1, \dots \targetf_k$ be the unit-demand functions that define $\targetf$.
Fix a set $S \subseteq [n]$.
For any item $i \in S$, define $j_i$ to be the index of the $\targetf_j$  under
which item $i$ has highest value: $\targetf(i) = \targetf_{j_i}(\{i\})$.  Then
for the partition $(S_1,...,S_k)$ of $S$ in which item $i$ is mapped to
$S_{j_i}$ for any $i$, we have
$\sum_{i \in S} \targetf(i) \leq R \targetf_1(S_1) + \ldots R \targetf_k(S_k)$.
Therefore:
$$
\textstyle f(S) = \frac{1}{R} \sum\nolimits_{i \in S} \targetf(i) ~\leq~ \max_{(S_{1}, \dots, S_{k}) \text{ partition of } S} ~  ( \targetf_{1}(S_{1}) + \dots + \targetf_{k}(S_{k}) ) ~ = ~ \targetf(S),
$$
where the last equality follows simply from the definition of an \OXS\ function.

Given the structural result~\ref{structuralOXSRleaves}, we can PMAC-learn the class of \OXS\ functions representable with  \XOR\ trees  with at most $R$ leaves y using Algorithm~\ref{alg-lin-sep} with parameters $R$, $\epsilon$ and $p=1$. The correctness by using a reasoning similar to the one in Theorem~\ref{alg-lin-sep-first}.
\end{proof}

\label{sec:GS}
We now consider the class of Gross Substitutes valuations, a superclass of \OXS\ valuations and a subclass of submodular valuations (recall Lemma~\ref{stmt:manyIncl}).
Gross Substitutes are fundamental to allocation problems with per-item prices~\cite{CramtonSS06CombinatorialA,GulS99WalrasianEwGS,book07};
in particular a set of per-item market-clearing prices exists if and (almost) only if all customers have
gross substitutes valuations.
 A valuation is gross substitutes
if raising prices on some items preserves the demand on  other items.
Given prices on items, an agent  with valuation $f$ demands a preferred set, formalized as follows.

\vspace{-0.2\baselineskip}
\begin{Definition}
For price vector $\vec{p} \in\reals^{n}$, the \emph{demand correspondence} $\demandSet{}_{f}(\vec{p})$
of valuation $f$ is the  collection of preferred sets
at prices $\vec{p}$, i.e.
$
 \demandSet{}_{f}(\vec{p}) = \arg\max_{S \subseteq \{1, \ldots, n\}} \{ f(S) - \textstyle\sum\nolimits_{j \in S} p_{j}\}
$.
A valuation $f$ is {\em gross substitutes}  (GS)
if for any price vectors
$\vec{p}' \geq \vec{p}$ (i.e. $p_{i}' \geq p_{i} \forall i \!\in\! [n]$), and any $ A \in \demandSet{}_{f}(\vec{p})$ there exists $A' \in \demandSet{}_{f}(\vec{p}')$ with $ A' \supseteq \{i\in A: p_{i} = p'_{i} \}$.
\label{def:GS}
\end{Definition}
\vspace{-0.5\baselineskip}

That is, the GS property requires that all items $i$
in some preferred set $A$ at the old prices $\vec{p}$ and
for which the old and new prices are equal ($p_i = p_i'$)
are simultaneously contained in some preferred set $A'$ at the new prices $\vec{p}'$.

As mentioned earlier Balcan and Harvey~\cite{BalcanH10LearningSF} proved that it is hard to PMAC-learn  the class
 of submodular functions
with an approximation factor  $o(n^{1/3} / \log n)$.
We show here that their result applies even for the class of gross substitutes.
This is quite surprising since such functions are typically considered easy from an economic optimization point of view.
Specifically:
\begin{Theorem}
\label{nicelowerbound}
 No algorithm can PMAC-learn the class of  gross substitutes with an approximation factor of
    $o({n^{1/3}}/{\log n})$.
  This holds even if $D$ is known and value queries are allowed.
\end{Theorem}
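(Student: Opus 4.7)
The plan is to directly port the $\tilde{\Omega}(n^{1/3})$ lower bound of Balcan and Harvey~\cite{BalcanH10LearningSF} by observing that the hard family underlying that bound, a collection of matroid rank functions, in fact consists of gross substitutes valuations. Since Lemma~\ref{stmt:manyIncl} puts gross substitutes strictly inside submodular, this is a reinterpretation of the BH construction rather than a new construction.

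First, I would restate the essentials of the BH hard instance: for an appropriate size parameter they exhibit a super-polynomially large collection $\{f_B\}_B$ of matroid rank functions, a distribution $D$, and a ``hard region'' $H \subseteq 2^{[n]}$ of constant $D$-mass, such that the functions agree outside $H$ but on $H$ different $B$'s assign values that differ multiplicatively by $\tilde{\Omega}(n^{1/3})$. The family is so large that polynomially many value queries, even adaptive and even issued with full knowledge of $D$, leave the posterior over $B$ essentially uniform on a sub-family of still super-polynomial size.

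Second, I would invoke the classical fact that every matroid rank function is a gross substitutes valuation: matroid rank functions are M$^{\natural}$-concave, and Fujishige--Yang established that M$^{\natural}$-concavity coincides with the gross substitutes property on integer-valued valuations. Hence each $f_B$ is in the GS class, and the BH hard family is in fact a hard family of GS functions. Third, I would close the argument in the stronger ``$D$ known, value queries allowed'' model: any algorithm outputs a hypothesis $f$ after issuing a polynomial set $Q$ of queries; since a uniformly random $B$ is, conditional on the answers to $Q$, still nearly uniform over super-polynomially many candidates, a fresh $S \sim D$ falls into $H \setminus Q$ with constant probability. On such an $S$ the multiset $\{f_B(S)\}_B$ spans a multiplicative range $\tilde{\Omega}(n^{1/3})$, so no single value $f(S)$ can avoid an $\Omega(n^{1/3}/\log n)$ multiplicative error against a constant fraction of the candidates; averaging over $B$ and $S$ then yields the claim.

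The main obstacle will be verifying the second step for the \emph{particular} matroids used by BH. If their construction is built from partition, truncation, direct sum, or contraction, each of which preserves M$^{\natural}$-concavity and hence the gross substitutes property, the embedding into GS is automatic; otherwise one would have to check M$^{\natural}$-concavity directly via the exchange axiom for the rank function at hand, which I expect to be the only potentially non-trivial calculation in the entire argument.
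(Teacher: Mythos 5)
Your overall strategy coincides with the paper's: cite the Balcan--Harvey $\tilde{\Omega}(n^{1/3})$ lower bound for matroid rank functions (which already holds with $D$ known and value queries allowed) and then show that matroid rank functions lie in the gross substitutes class, so the hard family is a GS family for free. The difference lies in how the key lemma (``matroid rank $\Rightarrow$ GS'') is established. You invoke $\mathrm{M}^{\natural}$-concavity and the Fujishige--Yang equivalence of $\mathrm{M}^{\natural}$-concavity with gross substitutes; the paper notes this route (attributing it to Murota) but deliberately gives a more elementary, self-contained proof via the Lien--Yan valuation-based characterization of GS --- namely that $f$ is GS iff for all items $a,b,c$ and sets $S$, $f^{S}(ab)+f^{S}(c) \leq \max\{f^{S}(ac)+f^{S}(b),\, f^{S}(bc)+f^{S}(a)\}$ --- followed by a short case analysis using only monotonicity and submodularity of the contracted rank function. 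Your route is perfectly valid and arguably faster if the Fujishige--Yang theorem is taken as a black box; the paper's route buys a proof from first principles requiring no discrete-convex-analysis machinery.

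One small but genuine confusion in your write-up: your final paragraph worries about whether the \emph{particular} matroids in the BH construction are $\mathrm{M}^{\natural}$-concave and suggests checking closure under partition/truncation/direct sum. This concern is moot: \emph{every} matroid rank function is $\mathrm{M}^{\natural}$-concave (equivalently, GS), with no case-by-case verification needed --- this is exactly the classical fact you already state two sentences earlier. There is no ``potentially non-trivial calculation'' left once that fact is in hand; the reduction is immediate, which is why the paper's proof of the theorem itself is two sentences long. Likewise, your third step re-derives the BH information-theoretic argument in the value-query model, which is unnecessary since BH already prove their bound in that stronger model and both proofs simply cite it.
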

\vspace{-0.3cm}
\begin{proof}
It is known that the class of matroid rank functions cannot be PMAC-learned with an approximation factor of
    $o({n^{1/3}}/{\log n})$, even if $D$ is known and value queries are allowed~\cite{BalcanH10LearningSF}.
One can show that a matroid rank function is a gross substitutes function (see Lemma~\ref{stmt:rankGS} below). Combining these, yields the theorem.
\end{proof}

Our  key tool for proving that any matroid rank function is also GS (Lemma~\ref{stmt:rankGS} below) is a valuation-based characterization of gross substitutes valuations due to~\cite{LienY07OnTheGrossSC}.
\begin{Lemma} 
A matroid rank function is gross substitutes.
\label{stmt:rankGS}
 \end{Lemma}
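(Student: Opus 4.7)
The plan is to invoke the valuation-based characterization of gross substitutes valuations due to Lien and Yang, as signalled in the statement immediately preceding the lemma. That characterization replaces the demand-based Definition~\ref{def:GS}, which quantifies over all price vectors, by a purely local condition on the function values: roughly, for every pair of sets $S,T$ and every element $i \in S \setminus T$, either a one-sided move of $i$ from $S$ to $T$, or a single swap $i \leftrightarrow j$ for some $j \in T \setminus S$, must not decrease the sum $f(S)+f(T)$. Because the reformulation is entirely in terms of function values, verifying it does not require us to reason directly about demand correspondences.

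With this reduction in hand, I would verify the exchange inequality for a matroid rank function $r$ of a matroid $M$ on $[n]$. The two ingredients I would use are: (a) submodularity of $r$, and (b) the matroid exchange/augmentation property, namely that if $I$ is a maximum independent subset of $S$ and $J$ a maximum independent subset of $T+i$ with $|J|>|I|$, then some $j \in J \setminus I$ can be added to $I$ while remaining independent. The proof then splits on whether $i$ is spanned by $S-i$ (equivalently, whether $r(S) = r(S-i)$). In the spanned case, one obtains the ``no-swap'' branch of the Lien--Yang inequality directly from submodularity of $r$ applied to $S-i$ and $T$, together with the unit-increment property $r(T+i)-r(T) \in \{0,1\}$. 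In the non-spanned case, $i$ lies in some base of $S$, and I would apply the exchange property to a pair of independent sets realizing $r(T)$ and $r(T+i)$ to extract an element $j \in T \setminus S$ whose swap in the other direction preserves the rank; combining this with submodularity gives the swap branch of the inequality.

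The main obstacle I anticipate is executing this case analysis cleanly, in particular ensuring that the exchange element $j$ witnessing the swap branch can always be chosen inside $T \setminus S$ (rather than in $T \cap S$, which would be useless). This is exactly the step that uses the matroid exchange property rather than submodularity alone; submodular functions need not be gross substitutes, so it is precisely the matroid structure of $r$ that must be exploited here. Once $j$ is produced, the resulting inequality $r(S)+r(T) \le r(S-i+j)+r(T+i-j)$ reduces to a routine comparison of ranks on sets differing by single elements, which follows from monotonicity and the unit-increment property. Invoking the Lien--Yang characterization in the reverse direction then yields that $r$ is gross substitutes, completing the proof.
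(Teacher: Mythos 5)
Your plan would work, but it does \emph{not} follow the paper's route, and it also misattributes the characterization it invokes. You say you will use the valuation-based characterization of gross substitutes ``due to Lien and Yang,'' but the condition you then describe---for every $S,T$ and $i\in S\setminus T$, either the one-sided move of $i$ or some swap $i\leftrightarrow j$ with $j\in T\setminus S$ does not decrease $f(S)+f(T)$---is the $M^\natural$-concave exchange property (Fujishige--Yang / Murota--Shioura), not the Lien--Yang condition. The characterization the paper actually imports from Lien--Yang is a purely local ``triplet'' inequality on second-order marginals: denoting $f^S(A)=f(S\cup A)-f(S)$, it requires
$f^{S}(ab)+f^{S}(c) \leq \max\{f^{S}(ac)+f^{S}(b),\ f^{S}(bc)+f^{S}(a)\}$
for all items $a,b,c$ and all sets $S$. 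The paper then exploits that for a rank function all increments are in $\{0,1\}$, so $f^S(ab)\in\{0,1,2\}$, and a two-case analysis (whether $f^S(ab)=2$ or $f^S(ab)\le 1$) finishes in a few lines using only monotonicity and submodularity. This is deliberately lighter than the route you describe: the paper explicitly remarks that proving the lemma via $M^\natural$-concavity was done earlier by Murota and calls that route ``more involved.''

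On the substance of your sketch: it is a legitimate path, and the nonspanned case is exactly where the matroid exchange axiom must be deployed, as you note. But the way you phrase that case is underspecified. When $r(S-i)=r(S)-1$ and also $r(T+i)=r(T)$, the move branch strictly fails, and you must produce $j\in T\setminus S$ with \emph{both} $r(S-i+j)\ge r(S)$ and $r(T+i-j)\ge r(T)$; ``extract an element $j$ whose swap preserves the rank'' doesn't yet pin down how you rule out $j$ landing in $T\cap S$ or being essential for $T+i$. By contrast, the triplet condition sidesteps the need to construct any exchange element at all: it is entirely a matter of comparing a handful of $\{0,1,2\}$-valued quantities. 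If you want your route to be airtight, you'd need to carry out the basis-exchange argument carefully; if you want the shortest proof, switch to the actual Lien--Yang triplet condition, which is what the paper uses.
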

 \vspace{-\baselineskip}
\begin{proof}
Denote $f$'s marginal value over $S$ by $f^{S}(A) = f(S \cup A) - f(S), \forall  A \!\subseteq\! [n] \! \setminus\! S$.
As shown in~\cite{LienY07OnTheGrossSC} $f$ is GS if and only if 
\begin{align}
f^{S}(ab)+f^{S}(c) \leq \max\{f^{S}(ac)+f^{S}(b), f^{S}(bc)+f^{S}(a)\}\
\text{for all items $a,b,c$ and set $S$}
\label{eq:GSabcS}
\end{align}
i.e. (by taking permutations) 
there is no unique maximizer among $f^{S}(ab)+f^{S}(c), f^{S}(ac)+f^{S}(b), f^{S}(bc)+f^{S}(a)$.

If $f$ is matroid rank function,  then so is $f^S$; in particular, $f^{S}(A) \leq |A|, \forall A \subseteq [n] \!\setminus S$.
We reason by case analysis.

Suppose that $f^S(ab)=2$. Then we have $f^S(a)=f^S(b)=1$. If $f^S(ac)=2$ or $f^S(bc)=2$, then $f^S(c)=1$ and
hence the inequality~\eqref{eq:GSabcS} holds. On the other hand, if $f^S(ac)=f^S(bc)=1$, then we have by the monotonicity and submodularity of $f^{S}$,
$f^S(ab)+f^S(c)\leq f^S(abc)+f^S(c)\leq f^S(ac)+f^S(bc)=2$, and the inequality~\eqref{eq:GSabcS} holds.

If $f^S(ab)\leq 1$ then $f^S(ab)=\max\{f^S(a),f^S(b)\}$. As $f^S(c)\leq f^S(ac)$ and
$f^S(c)\leq f^S(bc)$, Eq.~\eqref{eq:GSabcS} follows.
\end{proof}

We note that Lemma~\ref{stmt:rankGS} was previously proven in~\cite{Murota08SubmodularFMaMiDCA}
in a more involved way via the concept of $\mathrm{M}^\natural$-concavity from discrete convex  analysis.


\section{Learnability everywhere with value queries}
\label{value-queries}
In this section, we consider approximate learning with value queries~\cite{nick09,SF}. This is relevant for settings where instead of passively observing the values of $\targetf$ on sets $S$ drawn from a distribution,
the learner is able to actively query the value $\targetf(S)$ on sets $S$ of its choice and the  goal is to approximate with certainty the target $\targetf$ on all $2^{n}$ sets after querying the values of $\targetf$ on polynomially many sets. Formally:
\begin{Definition}
We say that an
algorithm $\cA$ learns the valuation family $\cF$ everywhere with value queries with an approximation factor of $\alpha \geq 1$
if, for any target function $\targetf \!\in\! \cF$,
after querying the values of $\targetf$ on polynomially (in $n$) many sets,
$\cA$ outputs in time polynomial in $n$ a function $f$ such that
 $f(S) \leq \targetf(S) \leq \alpha f(S), \forall S \subseteq \{1, \ldots, n\}$.
\end{Definition}
Goemans et al.~\cite{nick09} show that
for submodular functions
the learnability factor with value queries is $\tilde{\Theta}(n^{1/2})$.
We show here that their lower bound 
applies to the more restricted  \OXS\ and GS classes (their upper bound automatically applies).
We also show that this lower bound can be circumvented for the interesting subclasses of \OXS\ and \XOS\ that we considered earlier, efficiently achieving a  factor of $R$.
\begin{Theorem} 
\textbf{(1)} The classes of \OXS\ and GS functions are learnable with value queries with an approximation factor of $\tilde{\Theta}(n^{1/2})$.

\textbf{(2)} The following classes are learnable with value queries with an approximation factor of $R$:
\OXS\ with at most $R$ leaves in each tree,
\OXS\ with at most $R$ trees,
\XOS\ with at most $R$ leaves in each tree,
and
\XOS\ with at most $R$ trees.
\label{stmt:multExactHard}
\end{Theorem}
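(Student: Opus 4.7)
My plan is to prove the two parts separately, handling (2) first since it reduces to simple singleton queries and (1) which inherits its upper bound from prior work.

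For (2), the strategy is uniform across all four subclasses: query $\targetf$ on each of the $n$ singletons and output one of two closed-form hypotheses. For XOS with at most $R$ leaves per tree, I would use the unit-demand hypothesis $f(S) = \max_{i \in S} \targetf(\{i\})$. Monotonicity of $\targetf$ gives $f(S) \leq \targetf(S)$, and the matching upper bound $\targetf(S) \leq R f(S)$ follows because in the XOS representation each $\sum_i w_{ji}\chi_i(S)$ has at most $R$ non-zero terms with $w_{ji} \leq \targetf(\{i\})$. Since an OXS function with at most $R$ trees rewrites as an XOS function with at most $R$ leaves per tree (the tuple-of-leaves construction from the proof of Theorem~\ref{stmt:OXS-R-trees}), the same unit-demand hypothesis handles that class as well.

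For the other two classes I would output the linear hypothesis $f(S) = \tfrac{1}{R}\sum_{i\in S}\targetf(\{i\})$. For OXS with at most $R$ leaves per tree, correctness follows verbatim from the structural inequality $\targetf(S) \leq \sum_{i\in S}\targetf(\{i\}) \leq R\,\targetf(S)$ established in the proof of Theorem~\ref{stmt:OXS-R}(2). For XOS with at most $R$ trees, I would verify the bounds directly: writing $\targetf(S) = \max_{j \in [R]}\sum_{i\in S}w_{ji}$ and $\targetf(\{i\})=\max_j w_{ji}$, the bound $\targetf(S) \leq R f(S)$ comes from choosing $j^*$ attaining the max and bounding $w_{j^*i} \leq \max_j w_{ji}$, while $f(S) \leq \targetf(S)$ follows from $\max_j w_{ji} \leq \sum_j w_{ji}$ together with the fact that there are only $R$ trees, so the average over trees is at most the max.

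For (1), the upper bound $\tilde{O}(n^{1/2})$ is inherited directly from Goemans et al.'s value-query algorithm for submodular functions, since OXS $\subsetneq$ GS $\subsetneq$ submodular by Lemma~\ref{stmt:manyIncl}. The matching $\tilde{\Omega}(n^{1/2})$ lower bound for GS is immediate: the Goemans et al.\ construction certifies hardness using matroid rank functions, and by Lemma~\ref{stmt:rankGS} every matroid rank function is gross substitutes, so the lower bound transfers. The main obstacle is the OXS lower bound, since OXS is a strict subclass of GS and general matroid rank functions need not be OXS. My plan is to exploit the paper's footnote observation that OXS valuations with $\{0,1\}$ weights coincide with rank functions of transversal matroids, and to verify either that the specific hard families produced by Goemans et al.\ can be realized within the transversal matroid class (equivalently, as bipartite matching rank functions) or to construct a tailored hard family of transversal matroids achieving the same $\tilde{\Omega}(n^{1/2})$ bound; this step is the one that requires revisiting Goemans et al.'s construction in detail, whereas every other part of the theorem follows from tools already at hand.
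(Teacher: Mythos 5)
Your approach to part (2) is essentially identical to the paper's: query the $n$ singleton values and output either the unit-demand hypothesis $\max_{i\in S}\targetf(\{i\})$ (for \XOS\ with $\leq R$ leaves per tree, and \OXS\ with $\leq R$ trees via the tuple-of-leaves rewriting) or the scaled linear hypothesis $\frac{1}{R}\sum_{i\in S}\targetf(\{i\})$ (for \OXS\ with $\leq R$ leaves per tree, and \XOS\ with $\leq R$ trees). Your direct verification for \XOS\ with $\leq R$ trees — bounding $\sum_{i\in S}\max_j w_{ji} \leq \sum_j\sum_{i\in S}w_{ji} \leq R\max_j\sum_{i\in S}w_{ji}$ on one side and $\sum_{i\in S}w_{j^*i}\leq\sum_{i\in S}\targetf(\{i\})$ on the other — is exactly the paper's inequality, and your appeals to the structural results of Theorems~\ref{stmt:XOS-R-Leaves} and~\ref{stmt:OXS-R} for the other classes are the ones the paper itself invokes. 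Part (2) is correct and complete.

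For part (1), the upper bound via Goemans et al.\ and inclusion in submodular, and the GS lower bound via Lemma~\ref{stmt:rankGS}, are both correct and match the paper. However, the \OXS\ lower bound is a genuine gap in your write-up: you correctly identify that one must show the hard instances from Goemans et al.\ lie in \OXS\ (equivalently, are transversal-matroid rank functions), but you stop at a plan rather than carrying it out. This is precisely the step the paper spends Appendix~\ref{app:oneBumpOXS} on. Concretely, one must exhibit an \OXS\ representation of $f_{R,\alpha',\beta}(S)=\min(\beta+|S\cap\bar{R}|,|S|,\alpha')$; the paper does this (Lemma~\ref{stmt:GoemansOXS}) with an explicit depth-two SUM-of-MAX tree having $\alpha'-\beta$ \XOR{} trees whose leaves range over $\bar{R}$ and $\beta$ \XOR{} trees whose leaves range over all of $[n]$, all weights equal to $1$, followed by a four-way case analysis on which term of the $\min$ is active. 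Your instinct that the $\{0,1\}$-weight \OXS\ $\leftrightarrow$ transversal-matroid correspondence is the right tool is sound, but without actually producing the bipartite graph (or, equivalently, the \XOR{} trees) and checking that the maximum matching reproduces the $\min$ formula, the lower bound for \OXS\ is not established. Since the rest of part (1) hinges on exactly this inclusion, this is the one place where your proposal falls short of a complete proof.
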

\vspace{-0.5\baselineskip}

\begin{proof}[Proof Sketch]
\textbf{(1)}
We show in Appendix~\ref{app:oneBumpOXS} that the family of valuation functions used in~\cite{nick09} for proving the $\Omega(\frac{n^{1/2}}{\log n})$ lower bound  for learning submodular valuations  with value queries is contained in \OXS. 
The valuations in this family are of the form $g_{23}(S) = \min(|S|, \alpha')$ and $g^R(S) = \min(\beta + |S \cap ((\{1, \ldots, n\})\setminus R)|, |S|, \alpha') $
for $\alpha' = xn^{1/2}/5, \beta = x^2/5$ with $x^2 = \omega(\log n)$ and $R$ a subset of $\{1, \ldots, n\}$ of size $\alpha'$ (chosen uniformly at random).
These valuations are \OXS;
for example, $g_{23}(S)$ can be expressed as a SUM of $\alpha'$ MAX trees, each having as leaves  all items in $[n]$ with weight $1$.

\textbf{(2)}
To establish learnability for these interesting subclasses, we recall that for the first three of them (Theorems~\ref{stmt:XOS-R-Leaves} and~\ref{stmt:OXS-R}) any valuation $\targetf$ in each class was approximated to an $R$ factor by a function $f$ that only depended on the values of $\targetf$ on items.
An analogous result holds for the fourth class, i.e. \XOS\ with at most $R$ trees -- indeed, for such an \XOS\ $\targetf$, we have $\frac{1}{R} \sum_{i \in S} \targetf(\{i\}) \leq \targetf(S) \leq R \frac{1}{R} \sum_{i \in S} \targetf(\{i\}), \forall S \subseteq [n]$.
One can then query these $n$ values and output the corresponding valuation $f$.
\end{proof}

\smallskip

\noindent {\bf Note:}~~We note that the lower bound technique in~\cite{nick09} has been later used in a sequence of papers~\cite{IN09,GKTW09,SF} concerning optimization under submodular cost functions and our result (Lemma~\ref{stmt:GoemansOXS} in particular) implies that all the lower bounds in these papers apply to the smaller classes of \OXS\ functions and \GS functions.


\smallskip

\noindent {\bf Note:}~~We also note that since \XOS\ contains all submodular valuations, the lower bound of Goemans et al.~\cite{nick09} implies that the \XOS\ class is not learnable everywhere with value queries to a $o(\frac{n^{1/2}}{\log n}) =  \tilde{o}(n^{1/2})$ factor.
For the same $\Omega(\frac{n^{1/2}}{\log n})$ lower bound,
our proof technique (and associated family of \XOS\ valuations) for
Theorem~\ref{stmt:XOS-lower-bound} offers a simpler argument than that in~\cite{nick09}.

\section{Learning with prices}
\label{sec:prices}

We now introduce a new paradigm 
that is natural in many applications
where the learner can repeatedly obtain information on
the unknown valuation function of an agent via the agent's
\emph{decisions to purchase or not} rather than
via random samples from this valuation or via queries to it.
In this framework, the learner does not obtain the value of $\targetf$ on each input set $S_{1}, S_{2}, \dots$.
Instead, for each input set $S_{l}$, the learner observes $S_{l}$, quotes a price $p_{l}$ (of its choosing) on $S_{l}$ and obtains one bit of information: whether the agent purchases $S_{l}$ or not, i.e. whether $p_{l} \leq \targetf(S_{l})$ or not.
 The goal remains to approximate the function $\targetf$ well, i.e. within an $\alpha$ multiplicative factor:
   on most sets from $D$ with high confidence for PMAC-learning
   and on all sets with certainty for learning everywhere with value queries.
 The learner's challenge is in choosing prices that allow discovery of the agent valuation.
This framework is a special case of demand queries~\cite{book07}, where prices are: $p_{l}$  on $S_{l}$ and $\infty$ 
 elsewhere.
We call PMAC-learning with prices and VQ-learning with prices the variants of  this framework applied to our two learning models.
Each variant in this framework offers less information to the learner than its respective basic model.

\newcommand{\balpha}{\beta} 

Clearly, all our PMAC-learning lower bounds still hold for PMAC-learning with prices.
More interestingly, our upper bounds still hold as well.  In particular, we provide a reduction from the problem of PMAC-learning with prices to the problem of learning a linear separator,
for functions $\targetf$ such that for some $p>0$, $(\targetf)^p$ can be approximated to a $\balpha$ factor by a linear function.
Such $\targetf$ can be PMAC-learned to a $\balpha^{1/p}$ factor by Algorithm~\ref{alg-lin-sep}.
What we show in Theorem~\ref{stmt:PMACPrices} below is that such $\targetf$ are PMAC-learnable with prices to a factor of $(1+o(1))\balpha^{1/p}$
using only a small increase
in the number of samples over that used for (standard) PMAC learning.
For convenience, we assume in this section that all valuations are integral
and that $H$ is an upper bound on the values of $\targetf$, i.e.
$\targetf(S) \leq H, \forall S \subseteq [n]$.

\begin{Theorem}
Consider a family $\cF$ of valuations such that the $p$-th power of any $\targetf \in \cF$ can be approximated to a $\balpha$ factor
by a linear function: i.e., for some $w$ we have  $w\transpose \ind(S) \leq (\targetf(S))^{p} \leq \balpha w\transpose \ind(S)$ for all $S \subseteq [n]$, where $\balpha \geq 1, p > 0$.
Then for any $0<\eta\leq 1$, the family $\cF$ is PMAC-learnable with prices to a $(1+\eta)\balpha^{1/p}$ factor
using $O(\frac{n\log H}{\eta\eps} \ln(\frac{n\log H}{\eta\eps\delta}) )$
samples and time $\poly(n, \frac{1}{\epsilon}, \frac{1}{\delta}, \frac{1}{\eta})$.
\label{stmt:PMACPrices}
\end{Theorem}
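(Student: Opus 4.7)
The plan is to reduce PMAC-learning with prices to the linear-separator reduction of Algorithm~\ref{alg-lin-sep}, using a random geometric price grid to extract enough information from each one-bit response to simulate (approximately) knowing $\targetf(S_l)$. Set $G = \{(1+\eta)^i : 0 \leq i \leq \lceil \log_{1+\eta} H \rceil\}$, so $|G|=O(\log H/\eta)$. For each of $m$ i.i.d.\ samples $S_l \sim D$, quote a price $p_l$ drawn independently and uniformly at random from $G$ and observe the bit $b_l = \mathbf{1}[p_l \leq \targetf(S_l)]$. Build labeled examples in $\bR^{n+1}$ exactly as in Algorithm~\ref{alg-lin-sep}, except that the random coin flip is replaced by the observed $b_l$ and the unknown $(\targetf(S_l))^p$ is replaced by the known $p_l^p$: set $x_l=(\ind(S_l), p_l^p)$ labeled $+1$ if $b_l=1$, and $x_l=(\ind(S_l),(\balpha+\eps)p_l^p)$ labeled $-1$ if $b_l=0$. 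Then find a linear separator $u=(\hat w,-z)$ consistent with these examples (using the same null-cube handling as in Algorithm~\ref{alg-lin-sep}, which is still implementable since integrality lets a single price quote test whether $\targetf(S)=0$).

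Consistency of the induced separator problem is easy: taking $\hat w = w^*$ and $z=1/\balpha$, where $w^*$ is the linear approximator of $(\targetf)^p$ guaranteed by the hypothesis, the $b_l{=}1$ constraint $\hat w^{\transpose}\ind(S_l) \geq z p_l^p$ reduces to $w^{*\transpose}\ind(S_l) \geq p_l^p/\balpha$, which follows from $\balpha w^{*\transpose}\ind(S_l) \geq (\targetf(S_l))^p \geq p_l^p$; symmetrically, the $b_l{=}0$ constraint follows from $w^{*\transpose}\ind(S_l) \leq (\targetf(S_l))^p < p_l^p$. Then a standard uniform-convergence bound for linear separators in $\bR^{n+1}$ (VC dimension $n{+}2$), applied with the sample size $m=O(\tfrac{n\log H}{\eta\eps}\ln(\tfrac{n\log H}{\eta\eps\delta}))$ claimed in the theorem, ensures that, with probability at least $1-\delta$, the learned $(\hat w,-z)$ classifies correctly all but an $\eps' := \eps/|G|$ fraction of the joint distribution of $(S,p) \sim D \times \mathrm{Unif}(G)$.

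The conceptual heart of the proof---and what I expect to be the main obstacle---is converting this joint-distribution guarantee into a per-$S$ multiplicative approximation of $\targetf(S)$. I would use a Markov-type argument: the $D$-measure of $S$ for which the conditional error (over $p\sim\mathrm{Unif}(G)$) exceeds $1/|G|$ is at most $\eps'\cdot|G|/(1/|G|)\cdot(1/|G|)=\eps$. For every remaining $S$, the conditional error is strictly less than $1/|G|$, but since a single grid-point misclassification contributes exactly $1/|G|$ to that average, the separator must be correct at \emph{every} grid price for such $S$. For these good $S$, write $p^+$ for the largest grid price $\leq \targetf(S)$ and $p^-=(1+\eta)p^+$ for the next one up (so $p^- > \targetf(S) \geq p^+ \geq \targetf(S)/(1+\eta)$); the correctly-classified $b{=}1$ constraint at $p^+$ and $b{=}0$ constraint at $p^-$ together force
\[
\frac{(\targetf(S))^p}{(1+\eta)^p} \;\leq\; \frac{\hat w^{\transpose}\ind(S)}{z} \;\leq\; (\balpha+\eps)(1+\eta)^p (\targetf(S))^p.
\]
Consequently, the hypothesis $f(S)=(\hat w^{\transpose}\ind(S)/(z(\balpha+\eps)))^{1/p}$ satisfies $f(S) \leq (1+\eta)\targetf(S)$ and $\targetf(S) \leq (1+\eta)(\balpha+\eps)^{1/p}f(S)$, and rescaling by $1/(1+\eta)$ produces a hypothesis $\tilde f$ with $\tilde f(S) \leq \targetf(S) \leq (1+\eta)^2(\balpha+\eps)^{1/p}\tilde f(S)$ for a $(1-\eps)$-fraction of $S$; rechoosing $\eta$ and $\eps$ by constant factors yields the desired $(1+\eta)\balpha^{1/p}$ approximation. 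The integrality-plus-Markov step is the reason the sample complexity must be inflated over ordinary PMAC by the grid size $|G|=O(\log H/\eta)$: this inflation is precisely what is needed to turn a one-bit per-sample observation into a two-sided multiplicative sandwich on $\targetf(S)$.
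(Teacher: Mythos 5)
Your proposal is correct and takes essentially the same route as the paper: a random geometric grid of $O(\log H/\eta)$ price levels, a price-labeled linear-separator reduction consistent with $(w^*,-1/\balpha)$, and a grid-size inflation of the VC sample complexity so that a low joint-distribution error forces, via an averaging/Markov step, the separator to be correct at the two grid prices bracketing $\targetf(S)$ for a $(1-\eps)$ mass of $S$. The paper phrases that last step contrapositively (a badly approximated $S$ is misclassified with probability $\geq 1/(N+2)$ over the random price), uses spacing $\eta/3$ to absorb the rescaling you note is needed, and uses $\balpha$ rather than $\balpha+\eps$ in the negative examples, but these are cosmetic.
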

\vspace{-0.5\baselineskip}
\noindent
\begin{proof}

\newcommand{\alphaPFactor}{\balpha}

As in Algorithm~\ref{alg-lin-sep}, the idea is to use a reduction to learning a linear separator, but
where now the examples use prices (that the algorithm can choose) instead of function values (that the algorithm can no longer observe).
For each input set $S_{l}$, the purchase decision amounts to a comparison between the chosen price $q_{l}$ and $\targetf(S_{l})$.
Using the result of this comparison
we will construct examples, based on the prices $q_{l}$, that are always consistent with a linear separator obtained from $w\transpose \ind(S_{l})$, the linear function that approximates $(\targetf)^p$.
We will sample enough sets $S_{l}$ and assign prices $q_{l}$ to them in such a way that for sufficiently many $l$, the price $q_{l}$ is close to $\targetf(S_{l})$.
We then find a linear separator that has small error on the distribution induced by the
price-based  examples and show this yield a hypothesis $f(S)$ whose
error is not much higher on the original distribution with respect to the values of the (unknown) target function.

Specifically, we take $m = O(\frac{n\log H}{\eta\eps} \ln(\frac{n\log
H}{\eta\eps\delta}) )$ samples, and for convenience define $N =
\lfloor \log_{1+\eta/3} H \rfloor$. We assign to each input bundle
$S_{l}$ a price $q_{l}$ drawn uniformly at random from $\{(1 +
\eta/3)^i\}$ for $i=0,1,2, \ldots, N+1$, and present bundle $S_l$ to
the agent at price $q_l$.  The key point is that for bundles $S_l$
such that $\targetf(S_{l}) \geq 1$ this ensures at least a
$\frac{1}{N+2}$ probability that $\targetf(S_{l})(1+\eta/3)^{-1} <
q_{l} \leq \targetf(S_{l})$ and at least a $\frac{1}{N+2}$ probability
that $\targetf(S_{l}) < q_{l} \leq \targetf(S_{l})(1+\eta/3)$ (the
case of $\targetf(S_{l})=0$ will be noticed when the agent does not
purchase at price $q_l=1$ and is handled as in the proof of Theorem
\ref{alg-lin-sep-first}).

We construct new examples based on these prices and purchase decisions as follows. If $\targetf(S_{l}) < q_{l}$ (i.e. the agent does not buy) then we let $(x_{l}, y_{l})=((\ind(S_{l}), \alphaPFactor q_{l}^{p}), -1)$.
If $\targetf(S_{l}) \geq q_{l}$ (i.e. the agent buys) then we let $(x_{l}, y_{l}) = ( (\ind(S_{l}), q_{l}^{p}), +1)$.
Note that by our given assumption, the examples constructed
are always linearly separable.  In particular
the label $y_{l}$ matches $\sgn( (\alphaPFactor w,-1)\transpose x_{l} )$ in each case:
 $\alphaPFactor w\transpose\! \ind(S_{l}) \leq \alphaPFactor  (\targetf(S_{l}))^{p} < \alphaPFactor  q_{l}^{p}$
and
 $ q_{l}^{p} \leq (\targetf(S_{l}))^{p} \leq \alphaPFactor  w\transpose\! \ind(S_{l})$ respectively.
Let $D^{n+1}_{buy}$ denote the induced distribution on $\bR^{n+1}$.
We now find a linear separator $(\hat{w},-z) \!\in\! \bR^{n+1}$, where $\hat{w} \!\in\! \bR^n$ and $z \!\in\! \bR_+$, that is consistent
with $(x_{l},y_{l}), \forall l$.
We construct an intermediary hypothesis
$f'(S) = \frac{1}{\alphaPFactor z}\hat{w} \transpose \ind(S)$ based on the learned linear separator. The hypothesis output will be $f(S) = \frac{1}{1+\eta/3} (f'(S))^{1/p}$.

By standard VC-dimension sample-complexity bounds, our sample size $m$ is sufficient that the linear separator $(\hat{w},-z)$ has error on $D^{n+1}_{buy}$ at most $\frac{\eps}{N+2}$ with probability at least $1-\delta$.  We now show that this implies that with probability at least $1-\delta$, hypothesis $f(S)$ approximates $\targetf(S)$ to a factor $(1+\eta/3)^2\balpha^{1/p} \leq (1+\eta)\balpha^{1/p}$ over $D$, on all but at most an $\eps$ probability mass, as desired.

Specifically, consider some bundle $S$ for which $f(S)$ does {\em not} approximate $\targetf(S)$ to a factor $(1+\eta/3)^2\balpha^{1/p}$ and for which $\targetf(S) \geq 1$ (recall that zeroes are handled separately).  We just need to show that for such bundles $S$, there is at least a $\frac{1}{N+2}$ probability (over the draw of price $q$) that $(\hat{w},-z)$ makes a mistake on the resulting example from $D^{n+1}_{buy}$.  There are two cases to consider:
\begin{enumerate}
\item It could be that $f$ is a bad approximation because $f(S) > \targetf(S)$.  This implies that $f'(S) > [(1 + \eta/3)\targetf(S)]^p$ or equivalently that $\hat{w} \transpose \ind(S) > \balpha z [(1 + \eta/3)\targetf(S)]^p$.  In this case we use the fact that there is a $\frac{1}{N+2}$ chance that $\targetf(S) < q \leq \targetf(S)(1+\eta/3)$.  If this occurs, then the agent doesn't buy (yielding $x = (\ind(S),\balpha q^p), y=-1$) and yet $\hat{w} \transpose \ind(S) > \balpha z q^p$.  Thus the separator mistakenly predicts positive.
\item Alternatively it could be that $(1+\eta/3)^2\balpha^{1/p}f(S) < \targetf(S)$.  This implies that $(1+\eta/3)\balpha^{1/p}f'(S)^{1/p} < \targetf(S)$ or equivalently that $\hat{w}\transpose \ind(S) < z(\frac{\targetf(S)}{1+\eta/3})^p$.  In this case, we use the fact that there is a $\frac{1}{N+2}$ chance that $\frac{\targetf(S)}{1+\eta/3} < q \leq \targetf(S)$.  If this occurs, then the agent does buy (yielding $x = (\ind(S),q^p), y=+1$) and yet $\hat{w} \transpose \ind(S) < zq^p$.  Thus the separator mistakenly predicts negative.
\end{enumerate}
Thus, the error rate under $D^{n+1}_{buy}$ is at least a $\frac{1}{N+2}$ fraction of the error rate under $D$, and so a low error under  $D^{n+1}_{buy}$ implies a low error under $D$ as desired.
\end{proof}

\noindent {\bf Note:}~~We note that if there is an underlying desired pricing algorithm $\cA$,
for each input set $S_{l}$ we can take the price of $S_{l}$ to be $\cA(S_{l})$ with probability $1-\tilde{\eps}$ and a uniformly at random price in $\{1,2,4 \dots, H/2, H\}$ as in the previous result with probability $\tilde{\eps}$. The sample complexity of learning only goes up by a factor of at most $\frac{\log H \log{ \log H}}{\tilde{\eps}}$.

\smallskip

We can also recover our upper bounds on learnability everywhere with value queries (the corresponding lower bounds clearly hold).
By sequentially setting prices $1,2,4 \dots, H/2,H$ on each item we can learn
 $\targetf$'s values on items within a factor of $2$. Our structural results proving the approximability of $\targetf$ from interesting classes with a function that only depends on $\targetf(\{1\}),\dots,\targetf(\{n\})$ then yield the VQ-learnability with prices of these classes.

\begin{Theorem}
The following classes are VQ-learnable with prices to within an $2R$ factor:
\OXS\ with at most $R$ trees, \OXS\ with at most $R$ leaves in each tree,
\XOS\ with at most $R$ trees, and \XOS\ with at most $R$ leaves in each tree.
\label{stmt:VQPrices}
\end{Theorem}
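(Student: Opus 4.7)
The plan is to combine two ingredients already highlighted in the note preceding the statement. First, I would use the pricing channel to recover each singleton value $\targetf(\{i\})$ up to a factor of $2$: for every item $i$, quote the prices $1, 2, 4, \dots, H/2, H$ on the singleton bundle $\{i\}$ and record the buy/no-buy responses. This sweep pinpoints the unique consecutive pair of powers of two containing $\targetf(\{i\})$ (and declares $\targetf(\{i\}) = 0$ when even price $1$ is refused); let $\hat{v}_i$ denote the resulting estimate, which by construction satisfies $\hat v_i \le \targetf(\{i\}) \le 2 \hat v_i$. The entire item-value estimation uses only $O(n \log H)$ price queries, well within the polynomial budget.

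Next I would instantiate, for each of the four classes, the closed-form hypothesis coming from our earlier structural results, simply substituting $\hat v_i$ for $\targetf(\{i\})$. For \XOS\ with at most $R$ leaves per \OR\ tree, the hypothesis is the unit-demand function $f(S) = \max_{i \in S} \hat v_i$; monotonicity and the per-tree size bound give $\max_{i \in S} \targetf(\{i\}) \le \targetf(S) \le R\,\max_{i \in S}\targetf(\{i\})$, which is exactly the structural content of Theorem~\ref{stmt:XOS-R-Leaves}. The same hypothesis also handles \OXS\ with at most $R$ trees, since Theorem~\ref{stmt:OXS-R-trees} shows that class is a subclass of \XOS\ with at most $R$ leaves per tree. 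For \OXS\ with at most $R$ leaves per \XOR\ tree, I would output the linear form $f(S) = \frac{1}{R}\sum_{i \in S} \hat v_i$, invoking the inequality $\targetf(S) \le \sum_{i \in S} \targetf(\{i\}) \le R\,\targetf(S)$ proved inside Theorem~\ref{stmt:OXS-R-Leaves}. Finally, for \XOS\ with at most $R$ trees I would use the same linear hypothesis, relying on the bound $\frac{1}{R}\sum_{i \in S} \targetf(\{i\}) \le \targetf(S) \le \sum_{i \in S} \targetf(\{i\})$ recorded in the proof of Theorem~\ref{stmt:multExactHard}.

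The $2R$ approximation then follows from a short two-line sandwich argument in each of the four cases: the structural result delivers a factor-$R$ sandwich when the singletons are known exactly, and replacing $\targetf(\{i\})$ by $\hat v_i$ costs an additional multiplicative factor of at most $2$ on exactly one side of the resulting inequality. The only mild technical point will be bookkeeping on which side (upper or lower) the factor of $2$ accumulates for each of the four hypotheses, so that we end up with $f(S) \le \targetf(S) \le 2R\, f(S)$ rather than with the roles of $f$ and $\targetf$ swapped, and handling items with $\targetf(\{i\}) = 0$ explicitly so that they contribute neither to the $\max$ nor to the $\sum$. No genuine obstacle arises, since the four structural approximators are already in hand and the singleton-value recovery via prices is information-theoretically tight up to the stated factor of $2$.
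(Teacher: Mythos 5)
Your proposal matches the paper's own argument essentially exactly: the paper's note preceding Theorem~\ref{stmt:VQPrices} proposes precisely the same price sweep $1,2,4,\dots,H/2,H$ on each singleton to recover $\targetf(\{i\})$ within a factor of $2$, and then plugs the resulting estimates into the same four singleton-value-based structural approximators (the unit-demand max for \XOS\ with small trees and \OXS\ with few trees, the scaled sum for \OXS\ with small trees and \XOS\ with few trees). Your write-up actually gives more detail than the paper, which only states this as a short remark; the sandwich bookkeeping you flag is carried out correctly, so there is no gap.
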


\section{Conclusions}

In this paper we study the approximate learnability of valuations
 commonly used throughout economics and game theory
for the quantitative encoding of agent preferences. We provide upper and lower bounds regarding the learnability
of important subclasses of valuation functions that express no-complementarities.
%
Our main results concern their approximate learnability in the distributional learning (PAC-style) setting. We provide nearly tight lower and upper bounds of $\tilde{\Theta}(n^{1/2})$ on the approximation factor for learning \XOS\ and subadditive valuations, both widely studied superclasses of submodular valuations. Interestingly, we show that the  $\tilde{\Omega}(n^{1/2})$ lower bound can be circumvented for \XOS\ functions of polynomial complexity; we provide an algorithm for learning the class of \XOS\ valuations with a representation of polynomial size achieving an $O(n^{\eps})$ approximation factor in time $O(n^{1/\eps})$ for any $\eps > 0$. This highlights the importance of considering the complexity of the target function for polynomial time learning.  We also provide new learning results for interesting subclasses of submodular functions.
Our upper bounds for distributional learning leverage novel structural results for all these valuation classes. We
show that many of these results provide new learnability results in the Goemans et al. model~\cite{nick09} of approximate learning everywhere via value queries.

We also introduce a new model that is more realistic in economic settings, in which the learner
can set prices and observe purchase decisions at these prices rather than observing the valuation function directly.
In this model, most of our upper bounds continue to hold
despite the fact that the learner receives less information (both for learning in the distributional setting and with value queries),
while our lower bounds naturally extend.

\smallskip
\noindent\textbf{Acknowledgments}.
We thank Avrim Blum, Nick Harvey, and David Parkes for useful discussions.
This work was supported in part by NSF grants CCF-0953192 and CCF-1101215, AFOSR grant FA9550-09-1-0538, and a Microsoft Research Faculty Fellowship. 

\bibliographystyle{abbrv}
\bibliography{GS}

\appendix

\section{Additional Details for the Proof of Theorem~\ref{alg-lin-sep-first}}
\label{appendix-lin-sep}

\renewcommand{\targetfpower}[1]{(\targetf(#1))^{2}}

For PMAC-learning \XOS\ valuations to an $\sqrt{n+\eps}$ factor,
we apply Algorithm~\ref{alg-lin-sep} with parameters $R=n$, $\epsilon$, and $p=2$. 
The proof of correctness of  Algorithm~\ref{alg-lin-sep} follows by using the structural result in Claim~\ref{Satoru} and a technique of \cite{BalcanH10LearningSF}. We provide here the full details of this proof.

Because of the multiplicative error allowed by the PMAC-learning model, we
separately analyze the subset of the instance space where $\targetf$ is zero and the
subset of the instance space where $\targetf$ is non-zero. For convenience, we
define:
$$
\cP = \setst{ S }{ \targetf(S) \neq 0 } \qquad\text{and}\qquad \cZ = \setst{ S }{
\targetf(S) = 0 }.
$$
The main idea of our algorithm is to reduce our learning problem to the standard problem
of learning a binary classifier (in fact, a linear separator) from i.i.d.\ samples in the
passive, supervised learning setting~\cite{KV:book94,Vapnik:book98} with a slight twist
in order to handle the points in $\cZ$. The problem of learning a linear separator in the
passive supervised learning setting is one where the instance space is $\bR^m$, the
samples come from some fixed and unknown distribution $D'$ on $\bR^m$, and there is a
fixed but unknown target function $c^* : \bR^m \rightarrow \set{-1,+1}$, $c^*(x) = \sgn(u
\transpose x)$.
 The examples induced by $D'$ and $c^*$ are called \emph{linearly
separable} since there exists a vector $u$ such that $c^*(x) = \sgn(u \transpose x)$.
The linear separator learning problem we reduce to is defined as follows. The instance
space is $\bR^m$ where $m=n+1$ and the distribution $D'$ is defined by the following
procedure for generating a sample from it. Repeatedly draw a sample $S \subseteq [n]$
from the distribution $D$ until $\targetf(S) \neq 0$. Next, flip a fair coin for each. The sample
from $D'$ is
\begin{align*}
(\ind(S), \targetfpower{S}) &\qquad\text{(if the coin is heads)} \\
(\ind(S), (n+\eps) \cdot \targetfpower{S}) &\qquad\text{(if the coin is tails).}
\end{align*}
The function $c^*$ defining the labels is as follows:
samples for which the coin was heads are labeled $+1$, and the others are labeled $-1$.
We claim that the distribution over labeled examples induced by $D'$ and $c^*$ is
linearly separable in $\bR^{n+1}$.
To prove this we use the assumption that for the linear function $f(S) = \hat{w}\transpose \ind(S)$ with $w \in  \mathbb{R}^{n}$, we have $(\targetf(S))^{2} \leq \hat{f}(S) \leq n (\targetf(S))^{2}$ for all $S \subseteq [n]$.
Let $u = ( (n+\eps/2)\cdot w, -1) \in \bR^m$.
For any point $x$ in the support of $D'$ we have
\begin{align*}
x =(\ind(S), \targetfpower{S})
&\qquad\implies\qquad
u \transpose x = (n+\eps/2) \cdot \hat{f}(S) - \targetfpower{S} > 0
\\
x=(\ind(S), (n+\eps)\cdot\targetfpower{S})
&\qquad\implies\qquad
u \transpose x = (n+\eps/2) \cdot \hat{f}(S) - (n+\eps)\cdot\targetfpower{S} < 0.
\end{align*}
This proves the claim. Moreover, this linear function also satisfies $\hat{f}(S)=0$ for
every $S \in \cZ$. In particular, $\hat{f}(S)=0$ for all $S \in \cSz$ and moreover,
$$  \hat{f}(\set{j})~=~w_j~=~0 \qquad\text{ for every } j ~\in ~\cU_{D}\qquad\text{ where}
\qquad  \!\!\!\!\cU_D ~=~ \Union_{\substack{S_i \in \cZ }}{S_i}.
$$

Our algorithm is as follows. It first partitions the training set $\cS=
\left\{(S_1,\targetf(S_1)), \ldots, (S_\ell,\targetf(S_\ell))\right\}$ into two sets
$\cSz$ and $\cSnz$, where $\cSz$ is the subsequence of $\cS$ with $\targetf(S_i) = 0$,
and $\cSnz=\cS \setminus \cSz$. For convenience, let us denote the sequence  $\cSnz$ as
 $$ \cSnz ~=~ \big( (A_1,\targetf(A_1)), \ldots, (A_a, \targetf(A_a)) \big).$$
Note that $a$ is a random variable and we can think of the sets the $A_i$ as drawn
independently from $D$, conditioned on belonging to $\cP$. Let $$ \cU_{0} ~=~
\Union_{\substack{i \leq \ell \\ \targetf(S_i)=0}} \!\!\!\! S_i \qquad\text{ and }\qquad
\cL_0 ~=~ \setst{ S }{ S \subseteq \cU_0 }.
$$
 Using $\cSnz$, the algorithm then constructs a sequence
$ \cSnz' ~=~ \big( (x_1,y_1),\ldots,(x_a,y_a) \big) $ of training examples for the binary
classification problem. For each $1 \leq i \leq a$, let $y_i$ be $+1$ or $-1$, each with
probability $1/2$. If $y_i=+1$ set $x_i = (\ind(A_i), \targetfpower{A_i})$; otherwise set $x_i
= (\ind(A_i), (n+\eps) \cdot \targetfpower{A_i})$. The last step of our algorithm is to  solve a
linear program in order to find a linear separator $u = (\hat{w},-z)$ where $\hat{w} \in \bR^n$, $z
\in \bR$ consistent with the labeled examples $(x_i,y_i)$, $i=1 \leq i \leq a$, with the
additional constraints that $w_j=0$ for $j \in \cU_{0}$. The output hypothesis is $f(S) =
(\frac{1}{(n+\eps) z} \hat{w} \transpose \ind(S))^{1/2}$.

To prove correctness, note first that the linear program is feasible; this follows from
our earlier discussion using
 the facts (1) $\cSnz'$ is a set of labeled examples drawn
from $D'$ and labeled by $c^*$ and  (2) $\cU_{0} \subseteq \cU_D$. It remains to show
that $f$ approximates the target on most of the points. Let $\cY$ denote the set of
points $S \in \cP$ such that both of the points $(\ind(S), \targetfpower{S})$ and $(\ind(S),
(n+\eps)\cdot\targetfpower{S})$ are correctly labeled by $\sgn(u \transpose x)$, the linear
separator found by our algorithm. It is easy to show that the function $f(S)
= (\frac{1}{(n+\eps) z} \hat{w} \transpose \ind(S))^{1/2} $ approximates $\targetf$ to within a factor
$n+\eps$ on all the points in the set  $\cY$. To see this notice that for any point $S \in
\cY$, we have
\begin{gather*}
\hat{w} \transpose \ind(S) - z \targetfpower{S} > 0 \qquad\text{and}\qquad \hat{w} \transpose \ind(S) - z
(n+\eps) \targetfpower{S} < 0
\\
\implies\quad \frac{1}{(n+\eps) z} \hat{w} \transpose \ind(S) ~<~ \targetfpower{S} ~<~ (n+\eps)
\frac{1}{(n+\eps) z} \hat{w} \transpose \ind(S).
\end{gather*}
So, for any point in $S \in \cY$, the function $f(S)^{2} = \frac{1}{(n+\eps) z} \hat{w} \transpose
\ind(S) $ approximates $\targetfpower{\cdot}$ to within a factor $n+\eps$.
Moreover,  by design the function $f$ correctly labels as $0$ all the examples in
$\cL_0$. To finish the proof, we now note two important facts: for our choice of $\ell
= \frac{16n}{\epsilon} \log \left(\frac{n}{\delta \epsilon} \right)$, with high
probability both $\cP \setminus \cY$  and $\cZ \setminus \cL_0$ have small measure.
\begin{claim}
With probability at least $1-\delta$, the set $\cZ \setminus \cL_0$ has measure at most $\epsilon$.
\end{claim}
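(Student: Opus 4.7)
The plan is to exploit the key structural fact that for any subadditive $\targetf$, the zero set $\cZ = \{S : \targetf(S)=0\}$ is closed under unions and under taking subsets (closure under subsets follows from monotonicity; closure under union follows because $\targetf(S\cup S') \leq \targetf(S)+\targetf(S')=0$). Consequently, $\cL_0 = 2^{\cU_0}\subseteq \cZ$, and every time we observe a new example $S_i$ with $\targetf(S_i)=0$ and $S_i\not\subseteq \cU_0$, the set $\cU_0$ strictly grows; this can happen at most $n$ times.

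First, I would formalize the ``online'' version of this observation. Let $\cU_0^{(i)} = \cup_{j\le i:\targetf(S_j)=0} S_j$ and $\cL_0^{(i)} = 2^{\cU_0^{(i)}}$ (with $\cU_0^{(\ell)}=\cU_0$). The sequence $\cL_0^{(i)}$ is monotone non-decreasing in $i$, and $\cL_0^{(i)}\subseteq \cZ$ throughout. The crucial observation is: if at some point $\probover{S \sim D}{S \in \cZ \setminus \cL_0^{(i)}} > \epsilon$, then each fresh sample $S_{i+1}$ drawn from $D$ falls in $\cZ\setminus \cL_0^{(i)}$ with probability at least $\epsilon$, and when it does, $|\cU_0^{(i+1)}| \ge |\cU_0^{(i)}|+1$.

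Next, I would partition the $\ell = \tfrac{16n}{\epsilon}\log(n/(\delta\epsilon))$ samples into $n$ consecutive blocks of size $\ell/n = \tfrac{16}{\epsilon}\log(n/(\delta\epsilon))$. Call a block ``successful'' if either the measure of $\cZ \setminus \cL_0^{(\cdot)}$ has already dropped to at most $\epsilon$ at its start, or some sample within the block causes $\cU_0$ to grow. Whenever a block starts with measure exceeding $\epsilon$, the probability of failure is at most $(1-\epsilon)^{\ell/n} \le e^{-\epsilon \ell/n} \le \delta/n$. A union bound over the $n$ blocks then gives that with probability at least $1-\delta$ every block is successful. Since $\cU_0\subseteq [n]$ can grow at most $n$ times, successful growth in every block would force $|\cU_0|>n$, a contradiction; hence the measure must have dropped to $\epsilon$ by the end, giving $\probover{S \sim D}{S \in \cZ \setminus \cL_0} \le \epsilon$.

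The only subtle point, and the one I would be most careful about, is that the quantity $\probover{S \sim D}{S \in \cZ \setminus \cL_0^{(i)}}$ is itself a random variable depending on the prior samples, so the ``$\epsilon$-chance of growth'' claim must be interpreted conditionally on the history. The block decomposition above handles this cleanly: within each block one conditions on the state of $\cU_0$ at its start, and the bound $(1-\epsilon)^{\ell/n}$ is then a conditional probability, which plugs directly into the union bound. Apart from this, all steps are routine.
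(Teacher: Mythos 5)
Your proof is correct and follows essentially the same approach as the paper: partition the sample into $n$ blocks, argue that each block either starts with residual zero-mass at most $\epsilon$ or grows $\cU_0$ with probability at least $1-\delta/n$, union bound over blocks, and use the fact that $\cU_0 \subseteq [n]$ can grow at most $n$ times. The only difference is presentational — you make the block decomposition and the conditioning-on-history point explicit, whereas the paper states the same argument a bit more informally.
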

\begin{proof}
Let $\cL_k ~=~ \setst{ S }{ S \subseteq \cU_k }.$
Suppose that, for some $k$, the set $\cZ \setminus \cL_k$ has measure at least $\epsilon$.
Define $k' = k+\log(n/\delta)/\epsilon$.
Then amongst the subsequent examples $S_{k+1},\ldots,S_{k'}$,
the probability that none of them lie in $\cZ \setminus \cL_k$ is
at most $(1-\epsilon)^{\log(n/\delta)/\epsilon} \leq \delta/n$.
On the other hand, if one of them does lie in $\cZ \setminus \cL_k$,
then $\card{\cU_{k'}} > \card{\cU_k}$.
But $\card{\cU_k} \leq n$ for all $k$, so this can happen at most $n$ times.
Since $\ell \geq n \log(n/\delta) / \epsilon$, with probability at least $\delta$
the set $\cZ \setminus \cL_\ell$ has measure at most $\epsilon$.
\end{proof}
We now prove:
\begin{claim}
If $\ell = \frac{16n}{\epsilon} \log \left(\frac{n}{\delta \epsilon} \right)$, then
 with
probability at least $1-2\delta$, the set $\cP \setminus \cY$ has measure at most
$2\epsilon$ under $D$.
\label{stmt:PMinusY}
\end{claim}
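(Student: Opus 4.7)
The plan is a reduction to the standard VC-dimension sample-complexity bound for learning halfspaces in the realizable setting, combined with a case split on the mass $p := \Pr_D[\cP]$ of the non-zero region. The two bad events (too few non-zero samples, and the VC bound itself failing) will each absorb one of the two $\delta$'s in the stated $1-2\delta$ guarantee.

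First, I would dispatch the trivial case $p \leq 2\epsilon$: there, $\Pr_D[\cP \setminus \cY] \leq p \leq 2\epsilon$ deterministically. So assume $p > 2\epsilon$. In that regime $\ell p > 32 n \log(n/(\delta\epsilon))$, so a standard multiplicative Chernoff bound yields $|\cSnz| \geq \ell p/2$ with probability at least $1-\delta$.

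Next, condition on this Chernoff event. The elements of $\cSnz'$ are then $|\cSnz|$ i.i.d.\ draws from $D'$ (each one obtained by conditioning $D$ on $\cP$ and flipping an independent fair coin, exactly as in the reduction). The proof of Theorem~\ref{alg-lin-sep-first} already exhibits, via Claim~\ref{Satoru}, a linear separator $u^\star = ((n+\eps/2)w,-1) \in \bR^{n+1}$ that correctly labels every such example, so the problem lies in the realizable PAC setting for halfspaces in $\bR^{n+1}$ (VC dimension $n+2$). The classical realizable VC bound then guarantees that any consistent separator $u$ output by the LP has error at most $\epsilon' := \epsilon/p$ on $D'$ with probability at least $1-\delta$, provided
\[
|\cSnz| \;\geq\; C\Bigl(\tfrac{n}{\epsilon'}\log\tfrac{1}{\epsilon'} + \tfrac{1}{\epsilon'}\log\tfrac{1}{\delta}\Bigr).
\]
Substituting $|\cSnz| \geq \ell p/2$ and $\epsilon' = \epsilon/p$ collapses this requirement to $\ell \geq 2C\bigl(\tfrac{n}{\epsilon}\log\tfrac{1}{\epsilon'} + \tfrac{1}{\epsilon}\log\tfrac{1}{\delta}\bigr)$, which is comfortably met by $\ell = \tfrac{16n}{\epsilon}\log\tfrac{n}{\delta\epsilon}$.

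Finally, I would translate the $D'$-error bound back into a bound on $\Pr_D[\cP \setminus \cY]$. For every $S \in \cP \setminus \cY$, by definition at least one of its two copies $\exap,\exam$ is mislabeled by $u$, and each such copy has weight $\Pr_D[S]/(2p)$ under $D'$. Summing over $\cP \setminus \cY$ yields $\mathrm{err}_{D'}(u) \geq \Pr_D[\cP \setminus \cY]/(2p)$, so the bound $\mathrm{err}_{D'}(u) \leq \epsilon/p$ immediately gives $\Pr_D[\cP \setminus \cY] \leq 2\epsilon$, as desired. A union bound over the two failure events yields the claimed $1-2\delta$ overall confidence. The main subtlety, and the point requiring care, is the adaptive accuracy target $\epsilon' = \epsilon/p$ on $D'$: naively demanding $\epsilon' = \epsilon$ would be too stringent when $p$ is near $2\epsilon$ (since then $|\cSnz|$ scales only with $\ell \epsilon$ and cannot sustain a tight VC bound), whereas the choice $\epsilon' = \epsilon/p$ precisely matches the available sample budget.
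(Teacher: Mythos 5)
Your proposal matches the paper's own proof essentially step for step: the same case split on $\Pr_D[\cP]$, the same multiplicative Chernoff bound on the number of non-zero samples, the same realizable VC bound with the crucial rescaled target error $\epsilon' = \epsilon/\Pr_D[\cP]$ on $D'$, and the same factor-of-two translation back to $D$-measure via the two copies $\exap,\exam$. The only cosmetic differences are the threshold for the trivial case ($2\epsilon$ versus $\epsilon$) and the notation ($p$ for what the paper calls $q$); your closing observation about why the adaptive accuracy $\epsilon' = \epsilon/p$ is needed is correct and is exactly the choice the paper makes implicitly.
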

\label{sec:PMinusY}
\begin{proof}[Proof of Claim~\ref{stmt:PMinusY}]
Let $q = 1-p = \probover{S \sim D}{S \in \cP}$. If $q < \epsilon$ then the claim is
immediate, since $\cP$ has measure at most $\epsilon$. So assume that $q \geq \epsilon$.
Let $\mu = \expect{a} = q \ell$. By assumption $ \mu > 16n \log(n/\delta\epsilon)
\frac{q}{\epsilon}$. Then Chernoff bounds give that
\begin{align*}
\prob{ a < 8n \log (n/\delta\epsilon) \frac{q}{\epsilon} }
   < \exp(- n \log(n/\delta)q/\epsilon) < \delta.
\end{align*}
So with probability at least $1-\delta$, we have
   $a \geq 8n \log (qn/\delta\epsilon) \frac{q}{\epsilon}$.
By a standard sample complexity argument~\cite{Vapnik:book98}
with probability at least $1-\delta$, any linear separator consistent with $\cS'$ will
be inconsistent with the labels on a set of measure at most $\epsilon/q$ under $D'$.
In particular, this property holds for the linear separator $c$ computed by the linear program.
So for any set $S$,
the conditional probability that either
$(\ind(S), \targetfpower{S})$ or $(\ind(S), (n+\eps)\cdot\targetfpower{S})$ is incorrectly labeled,
given that $S \in \cP$, is at most $2 \epsilon / q$.
Thus
$$
\prob{ S \in \cP \And S \not \in \cY }
~=~ \prob{ S \in \cP } \cdot \probg{ S \not \in \cY }{ S \in \cP }
~\leq~ q \cdot (2\epsilon/q),
$$
as required.
\end{proof}
In summary,
our algorithm outputs a hypothesis $f$ approximating
$\targetf$ to within a factor $(n+\eps)^{1/2}$ on  $\cY \cup \cL_\ell$.
The complement of this set is $(\cZ \setminus \cL_0) \cup (\cP \setminus \cY)$,
which has measure at most $3 \epsilon$, with probability at least $1-3\delta$.


\section{Additional Results for Theorem~\ref{stmt:OXS-R}}
\label{app:OXS-R}

We prove that Algorithm~\ref{alg-XOR} can be used to PAC-learn (i.e. PMAC-learn with $\alpha=1$) any unit-demand valuation.

\begin{Lemma}
\label{stmt:XS-1}
The class of unit-demand valuations is properly PAC-learnable by using
    $
    \ell = O(n \ln(n/\delta)/\eps) $ training
    examples and  time $\poly(n,1/\epsilon,1/\delta)$.
\end{Lemma}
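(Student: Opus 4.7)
The plan is to analyze Algorithm~\ref{alg-XOR} directly when the target $\targetf$ is unit-demand with weights $w_1^*, \ldots, w_n^*$. The starting observation is that for any training set $S_j$ containing item $i$, $\targetf(S_j) = \max_{k \in S_j} w_k^* \geq w_i^*$, so the algorithm's estimate satisfies $f(i) \geq w_i^*$ whenever $i$ is seen in some sample (and $f(i)=0$ otherwise). Consequently, on any test set $S$ all of whose items are seen, $f(S) = \max_{i \in S} f(i) \geq \targetf(S)$, so the only possible error on such sets is an \emph{overestimate} $f(i) > \targetf(S)$ for some $i \in S$. I would therefore split the error as
\[
\probover{S \sim D}{f(S) \neq \targetf(S)} \;\leq\; \probover{S \sim D}{S \text{ contains an unseen item}} \;+\; \probover{S \sim D}{\exists\text{ seen } i \in S \text{ with } f(i) > \targetf(S)},
\]
and aim to bound each term by $\eps/2$ with probability at least $1 - \delta/2$ over the draw of the training sample.

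The first term is a standard heavy-hitter argument: with $\ell = \Omega((n/\eps)\log(n/\delta))$ samples, every item $i$ with $\probover{S \sim D}{i \in S} > \eps/(2n)$ is seen with probability $\geq 1 - \delta/(2n)$, and a union bound over the $n$ items ensures that all such ``heavy'' items are seen with probability $\geq 1 - \delta/2$. Conditional on this, each remaining unseen item has marginal probability $\leq \eps/(2n)$, and a further union bound over items bounds the probability of a test set containing any unseen item by $n\cdot\eps/(2n) = \eps/2$.

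The second term is the delicate one. Let $v_1 < v_2 < \cdots < v_t$ ($t \leq n$) be the distinct values among $w_1^*, \ldots, w_n^*$, and for each item $i$ and index $j$ with $v_j \geq w_i^*$ define
\[
 B_{i,j} \;=\; \{S \,:\, i \in S,\ \max_{k \in S} w_k^* \leq v_j\}.
\]
The key structural fact is that $f(i)$ equals $v_{j^*(i)}$, where $j^*(i)$ is the \emph{smallest} index $j$ for which the training sample hits $B_{i,j}$; in particular, $B_{i, j^*(i)-1}$ is not hit by any training example, and this event is \emph{exactly} the set of test points on which item $i$ triggers an overestimate $f(i) > \targetf(S)$. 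Since there are only $\leq n^2$ events $B_{i,j}$, a single union bound shows that with probability $\geq 1-\delta/2$ every $B_{i,j}$ with $\probover{S \sim D}{S \in B_{i,j}} > \eps/(2n)$ is hit by some sample, so $\probover{S \sim D}{S \in B_{i, j^*(i)-1}} \leq \eps/(2n)$ for each $i$, and summing over the $n$ items bounds the second term by $\eps/2$. The main obstacle the plan addresses is keeping the union bound subexponential: a priori the bad events could be indexed by arbitrary thresholds, but observing that only the $\leq n$ distinct weight values $v_j$ actually matter restricts us to $O(n^2)$ events and yields the claimed sample complexity $\ell = O((n/\eps)\log(n/\delta))$, with running time $O(n\ell) = \poly(n, 1/\eps, 1/\delta)$ for computing $f$ and evaluating its definition.
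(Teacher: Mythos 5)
Your proof is correct, but it takes a genuinely different route from the paper's. The paper first shows that Algorithm~\ref{alg-XOR} is consistent on the training sample (i.e.\ $f(S_l)=\targetf(S_l)$ for all $l$), and then applies a standard Occam's-razor counting argument: since the algorithm always outputs an $f$ with $f(i)\in\{\targetf(1),\ldots,\targetf(n)\}$ for every $i$, there are at most $n^n$ possible output hypotheses, and a union bound over them shows that, with probability at least $1-\delta$, any consistent output in this finite class has error at most $\eps$ once $\ell = \Omega\bigl((n\ln n + \ln(1/\delta))/\eps\bigr)$. You instead analyze the error of the specific output hypothesis directly, decomposing it into the probability that the test set contains an unseen item (controlled by a heavy-hitter argument) and the probability that some seen item $i$ overestimates (controlled by noting that for each $i$ the overestimate region is exactly one of at most $n$ nested events $B_{i,j}$ indexed by the distinct weight values, so a union bound over $O(n^2)$ events suffices). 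Both arguments are sound and both give $\ell = O(n\ln(n/\delta)/\eps)$, though the paper's counting argument has a marginally better $\delta$-dependence ($\ln(1/\delta)$ rather than $n\ln(1/\delta)$). The paper's approach is shorter and transfers immediately to any setting with a polynomially described finite hypothesis class; yours is more fine-grained, in that it isolates precisely the two mechanisms by which Algorithm~\ref{alg-XOR} can err and does not require enumerating the hypothesis class at all.
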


\begin{proof} We first show how to solve the consistency problem in
polynomial time: given a sample $(S_1, \targetf(S_1)), \ldots, (S_m,
\targetf(S_m))$ we show how to construct in polynomial time a
unit-demand function $f$ that is consistent with the sample, i.e.,
$f(S_l)=\targetf(S_l)$, for $l \in \{1,2, \ldots, m\}$.
In particular, using the reasoning in  Theorem~\ref{stmt:XOS-R-Leaves}
for $R=1$, we show that the unit-demand hypothesis $f$ output by
Algorithm~\ref{alg-XOR} is consistent with the samples.
We have
$$f(S_l) = \max_{i \in S_l} f(i) = \max_{i \in S_l} \min_{j: i \in S_j} \targetf(S_j) \leq \targetf(S_l).$$
Also note that for any $i \in S_l$ we have $\targetf(i) \leq \targetf(S_l)$, for $l \in \{1,2, \ldots, m\}$.  So
$f(i) \geq \targetf(i) ~~\mathrm{~for~any~} i \in S_1 \cup \ldots \cup S_m.$.
Therefore for any $l \in \{1,2, \ldots, m\}$ we have :
$$\targetf(S_l) = \max_{i \in S_l} \targetf(i) \leq   \max_{i \in S_l} f(i)=  f(S_l).$$
Thus $\targetf(S_l)=f(S_l)$ for $l \in \{1,2, \ldots, m\}$.

We now claim that $m = O(n \ln(n/\delta)/\eps)$ training examples are
sufficient so that with probability at least $1-\delta$, the
hypothesis $f$ produced has error at most $\epsilon$.  In particular,
notice that Algorithm~\ref{alg-XOR} guarantees that $f(i) \in
\{\targetf(1), \ldots, \targetf(n)\}$ for all $i$.  This means that
for any given target function $\targetf$, there are at most $n^n$
different possible hypotheses $f$ that Algorithm~\ref{alg-XOR} could
generate.  By the union bound, the probability that the algorithm
outputs one of error greater than $\eps$ is at most $n^n(1-\eps)^m$
which is at most $\delta$ for our given choice of $m$.
\end{proof}

\begin{Lemma}
If $\targetf$ is \OXS\ with at most $R$ trees, then it is
also unit-demand with at most $n^R$ leaves (with $R$-tuples as items).
For constant $R$, the family $\cF$ of
\OXS\ functions with at most $R$ trees
is PAC-learnable using $O(R n^R \log(n/\delta)/\eps)$ training examples and time $\poly(n,1/\epsilon,1/\delta)$.
\label{stmt:OXS-R-trees-meta-items}
\end{Lemma}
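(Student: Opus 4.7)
The plan is to prove the two claims in order. First I would establish the structural claim that an \OXS\ valuation with at most $R$ trees can be re-encoded as a unit-demand valuation over a ground set of $R$-tuples. Let $\targetf_{1},\dots,\targetf_{R}$ be the unit-demand trees, with $w_{j,i}$ the weight of item $i$ in tree $\targetf_{j}$. From Definition~\ref{def:OXS} and the monotonicity of each $\targetf_{j}$, one may assume without loss that in the optimal partition $(S_{1},\dots,S_{R})$ of $S$, each $S_{j}$ contains at most one element (the one attaining $\targetf_{j}(S_{j})$), so that
\[
\targetf(S)=\max\Bigl\{\,\textstyle\sum_{j=1}^{R} w_{j,i_{j}}\;:\;(i_{1},\dots,i_{R})\in(S\cup\{\bot\})^{R}\text{ with distinct non-}\bot\text{ entries}\Bigr\},
\]
where $w_{j,\bot}=0$. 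I would then define the meta-ground-set $M$ to be the collection of such $R$-tuples (at most $n^{R}$ of them up to the $\bot$-padding), assign to each meta-item $\mathbf{i}=(i_{1},\dots,i_{R})\in M$ the weight $W_{\mathbf{i}}=\sum_{j} w_{j,i_{j}}$, and map every $S\subseteq[n]$ to the meta-set $\tilde S=\{\mathbf{i}\in M:\,i_{j}\in S\cup\{\bot\}\ \forall j\}$. The displayed identity then reads $\targetf(S)=\max_{\mathbf{i}\in\tilde S}W_{\mathbf{i}}$, which is exactly a unit-demand valuation on the ground set $M$ with $|M|\le n^{R}$ leaves.

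Next I would use this encoding to PAC-learn $\cF$ when $R$ is constant, by reducing to Lemma~\ref{stmt:XS-1}. The distribution $D$ on $2^{[n]}$ pushes forward to a distribution $\tilde D$ on $2^{M}$ via $S\mapsto \tilde S$, and the label $\targetf(S)$ is preserved. I would run Algorithm~\ref{alg-XOR} on the meta-samples $\{(\tilde S_{l},\targetf(S_{l}))\}_{l\le m}$: it outputs values $f(\mathbf{i})=\min_{l:\mathbf{i}\in\tilde S_{l}}\targetf(S_{l})$ and defines the unit-demand hypothesis $f(S)=\max_{\mathbf{i}\in\tilde S}f(\mathbf{i})$. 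By Lemma~\ref{stmt:XS-1} applied to the ground set $M$ (of size $N\le n^{R}$), the output hypothesis is consistent on the sample and, because the class of hypotheses Algorithm~\ref{alg-XOR} can produce has size at most $N^{N}$, the union bound gives PAC accuracy $\eps$ with confidence $1-\delta$ once $m\ge (N\ln N+\ln(1/\delta))/\eps=O(n^{R}(R\log n+\log(1/\delta))/\eps)=O(Rn^{R}\log(n/\delta)/\eps)$. Since the error statement in the meta-universe is measured under $\tilde D$, which has the same mass as $D$ under the bijective map $S\mapsto\tilde S$, it transfers back to an error statement for $\targetf$ under $D$. The running time is $\poly(n,1/\eps,1/\delta)$ because $N=n^{R}$ is polynomial for constant $R$ and Algorithm~\ref{alg-XOR} is linear in the sample and the size of $M$.

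The substantive step is the structural reduction, and the main thing to be careful about is the ``distinct non-$\bot$ entries'' restriction, which is what makes the meta-item weight $W_{\mathbf{i}}$ equal to the contribution of a valid partition rather than an overcount: if one dropped distinctness and allowed $i_{1}=i_{2}=i$ in some tuple, the meta-weight $w_{1,i}+w_{2,i}$ would exceed the true \OXS\ value on $\{i\}$, which is only $\max(w_{1,i},w_{2,i})$ because $i$ can lie in only one block of the partition. Once distinctness is enforced, the identity $\targetf(S)=\max_{\mathbf{i}\in\tilde S}W_{\mathbf{i}}$ follows directly from Definition~\ref{def:OXS}, and the rest is a routine instantiation of Lemma~\ref{stmt:XS-1} in the expanded universe.
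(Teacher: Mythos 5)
Your proposal is correct and takes essentially the same approach as the paper: both build a unit-demand representation over a polynomial-sized collection of meta-items indexed by selections of at most $R$ original items, map each $S$ to the meta-set of selections it contains, and invoke Lemma~\ref{stmt:XS-1} on the expanded universe. The only cosmetic difference is that you index meta-items by ordered $R$-tuples (with $\bot$-padding and a distinctness constraint) weighted by $\sum_{j} w_{j,i_{j}}$, whereas the paper indexes them by unordered subsets $S^{R}$ of size at most $R$ weighted by $\targetf(S^{R})$ — the latter sidesteps the distinctness and overcounting discussion you correctly flag, and requires no reference to the hidden weights $w_{j,i}$, but both choices yield $O(n^{R})$ meta-items and the identical reduction, sample complexity, and running time.
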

\begin{proof}
We start by noting that since $\targetf$ is an \OXS\ function representable with at most $R$ \XOR{} trees, then $\targetf$ is uniquely determined by its values on sets of size up to $R$. Formally,
\begin{align}
\targetf(S)  = \mathop{\max_{S^R \subseteq S}}_{|S^R| \leq R} \targetf(S^R), \forall S \subseteq \{ 1, \dots, n\}
\label{eq:SRf}
\end{align}

We construct a unit-demand $f'$, closely related to $f$, on meta-items
corresponding to each of the $O(n^R)$ sets of at most $R$ items.
In particular, we define one meta-item $i_{S^R}$ to represent each set $S^R \subseteq \{ 1, \dots, n\}$ of size at most $R$ and let
\begin{align*}
f'(i_{S^R}) = \targetf(S^R).
\end{align*}
We define $f'$ as unit-demand over meta-items; i.e. beyond singleton
sets, we have
\begin{align}
f'(\{i_{S_1^R}, \dots, i_{S_L^R}\}) = \max_{l=1, \dots, L} f'(i_{S_l^R}).
\label{eq:SRfp}
\end{align}
By equations \eqref{eq:SRf} and \eqref{eq:SRfp}, for all $S$ we have
$\targetf(S) = f'(I_S)$ where $I_S = \{i_{S^R} : S^R \subseteq S\}$.

Since we can perform the mapping from sets $S$ to their corresponding
sets $I_S$ over meta-items in time $O(n^R)$, this implies that
to PAC-learn $\targetf$, we can simply PAC-learn $f'$ over
the $O(n^R)$ meta-items using Algorithm \ref{alg-XOR}.
Lemma~\ref{stmt:XS-1} guarantees that this will PAC-learn using
$O(n^R \log(n^R/\delta)/\eps)$ training examples and running time
$\poly(n,1/\epsilon,1/\delta)$ for constant $R$.
\end{proof}

\section{Additional Result for Theorem~\ref{stmt:multExactHard}}
\label{app:oneBumpOXS}

\cite{nick09} proved
that a certain matroid rank function $f_{R, \alpha', \beta}(\cdot)$,
defined below,  is hard to learn everywhere with value queries to an approximation factor of $o(\sqrt{\frac{n}{\ln n}})$.
 We show that the rank function $f_{R, \alpha', \beta}(\cdot)$ is in
\OXS\ (all leaves in all \OXS\ trees will have value $1$).
$f_{R, \alpha', \beta}(\cdot): 2^{\{1, \dots, n\}} \to \reals$ is defined as follows.
Let subset $R \subseteq \{1,\dots,n\}$ and $\bar{R} = (\{1, \dots, n\})\!\setminus\!R$ its complement.
Also fix integers $\alpha', \beta \in \bN$.
Then
\begin{align}
f_{R, \alpha', \beta}(S) = \min(\beta + |S \cap \bar{R}|, |S|, \alpha'), \ \forall S \subseteq \{1,\dots,n\}
\label{eq:Goemans}
\end{align}

 As a warm-up, we show that a simpler function than $f_{R, \alpha',
\beta}(\cdot)$ is in \OXS.  This simpler function essentially corresponds to
$\beta = 0$ and will be used in the
case analysis for establishing that $f_{R, \alpha', \beta}(\cdot)$ is
in \OXS.
\begin{Lemma}
Let $R' \subseteq \{1,\dots,n\}$ and $c \in \bN$. Then the function $f(\cdot): 2^{\{1,\dots,n\}} \to \reals$ defined as
\begin{align}
f_{R',c}(S) = \min(c, |S \cap R'|), \forall S \subseteq \{1,\dots,n\}
\label{eq:addBudget}
\end{align}
is in \OXS.
\label{stmt:addBudgetOXS}
\end{Lemma}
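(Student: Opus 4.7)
The plan is to exhibit an explicit \OXS\ representation of $f_{R',c}$ using exactly $c$ identical unit-demand trees, and then verify that the convolution equals $f_{R',c}$ by computing the optimal partition.

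First, I would define $k = c$ unit-demand valuations $f_1, \dots, f_k$ (one per tree), all identical. Each $f_j$ has the elements of $R'$ as its leaves, each with weight $1$; elements outside $R'$ are given weight $0$ (equivalently, are not leaves). Thus for any $T \subseteq [n]$,
\[
 f_j(T) \;=\; \max_{i \in T} w_{ji} \;=\; \begin{cases} 1 & \text{if } T \cap R' \neq \emptyset, \\ 0 & \text{otherwise.}\end{cases}
\]
Let $g(S) = \max\{ f_1(S_1) + \dots + f_k(S_k) : (S_1,\dots,S_k) \text{ is a partition of } S\}$ denote the associated \OXS\ valuation. The goal is to show $g(S) = \min(c,|S\cap R'|) = f_{R',c}(S)$ for every $S \subseteq [n]$.

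For the upper bound $g(S) \le \min(c,|S\cap R'|)$, note that in any partition $(S_1,\dots,S_c)$ of $S$ we have $f_j(S_j) \le 1$, so the sum is at most $c$. Also, a term $f_j(S_j)$ can contribute $1$ only if $S_j \cap R' \neq \emptyset$, which requires at least one distinct element of $S \cap R'$ to sit inside $S_j$; since the $S_j$ are disjoint, the number of contributing terms is at most $|S \cap R'|$, giving $g(S) \le |S \cap R'|$. For the matching lower bound, I would exhibit a partition achieving $\min(c,|S\cap R'|)$: enumerate the elements of $S \cap R'$ as $r_1, r_2, \dots, r_{|S\cap R'|}$, put $r_j$ alone in $S_j$ for $j = 1, \dots, \min(c,|S\cap R'|)$, and dump all remaining elements (including the rest of $S \cap R'$ if $|S\cap R'| > c$, and all of $S \setminus R'$) into any one of the parts. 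This yields exactly $\min(c,|S\cap R'|)$ nonempty intersections with $R'$ among the parts, each contributing $1$.

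There is essentially no obstacle here; the only thing to be slightly careful about is that Definition~\ref{def:OXS} allows weights in $\bR_+$, so giving weight $0$ to items outside $R'$ (or equivalently omitting them as leaves) is permitted. Putting the two bounds together gives $g = f_{R',c}$, so $f_{R',c}$ is \OXS\ as claimed.
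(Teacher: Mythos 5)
Your proposal is correct and uses essentially the same construction as the paper: an \OXS\ representation with $c$ identical unit-demand (\XOR) trees, each having every element of $R'$ as a leaf of weight $1$. The only difference is cosmetic — the paper separately dispatches the case $c \geq |R'|$ by noting $f$ is linear there, whereas your argument handles both cases uniformly and spells out the matching upper and lower bounds on the convolution value.
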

\begin{proof}
For ease of notation let $f(\cdot) = f_{R',c}(\cdot)$.
We assume $c < |R'|$; otherwise,
$f(S) = |S \cap R'|, \forall S \subseteq \{1,\dots,n\}$, which is a
linear function ($f(S) = \sum_{x \in S} f(x)$ where $f(x)=1$ if $x \in
R'$ and $f(x)=0$ otherwise) and any linear function belongs to the
class \OXS~\cite{LehmannLN02CombinatorialAwDMU}.
Assuming $c < |R'|$, we construct an \OXS\ tree $T$ with $c$ \XOR{}
trees, each with one leaf for every element in $R'$. All leaves have
value 1.  We refer the reader to Fig.~\ref{fig:oneBumpSimpleOXS}.

\begin{figure}%
\centering
\subfigure[OXS representation for the function in Eq.~\eqref{eq:addBudget}.]{
\label{fig:oneBumpSimpleOXS}
\includegraphics[scale=.6]{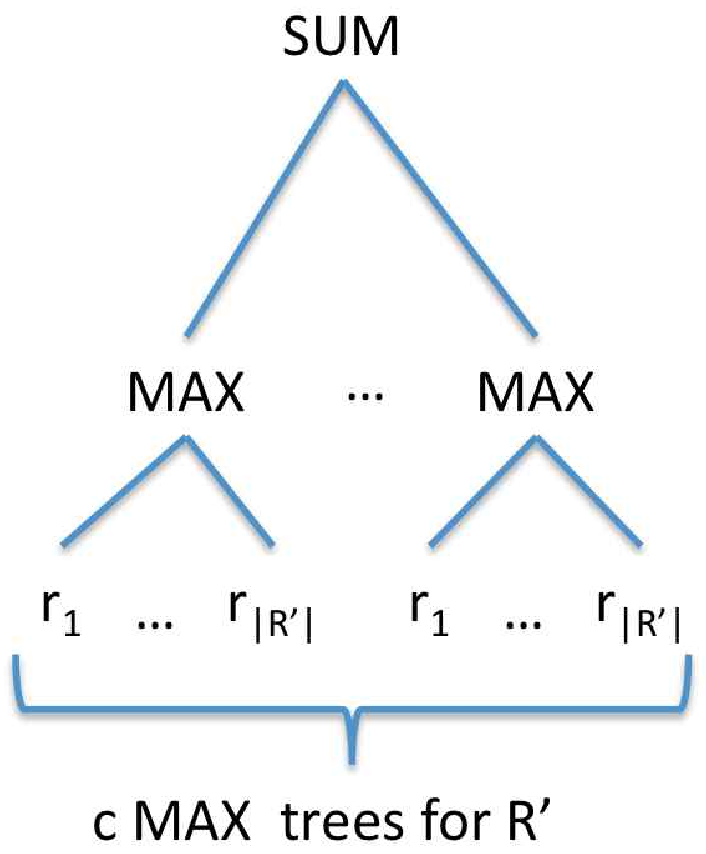}
}
\qquad
\subfigure[OXS representation for the function in Eq.~\eqref{eq:Goemans} when $n > \alpha' > \beta$.]{
\label{fig:oneBumpGoemansOXS}
\includegraphics[scale=.6]{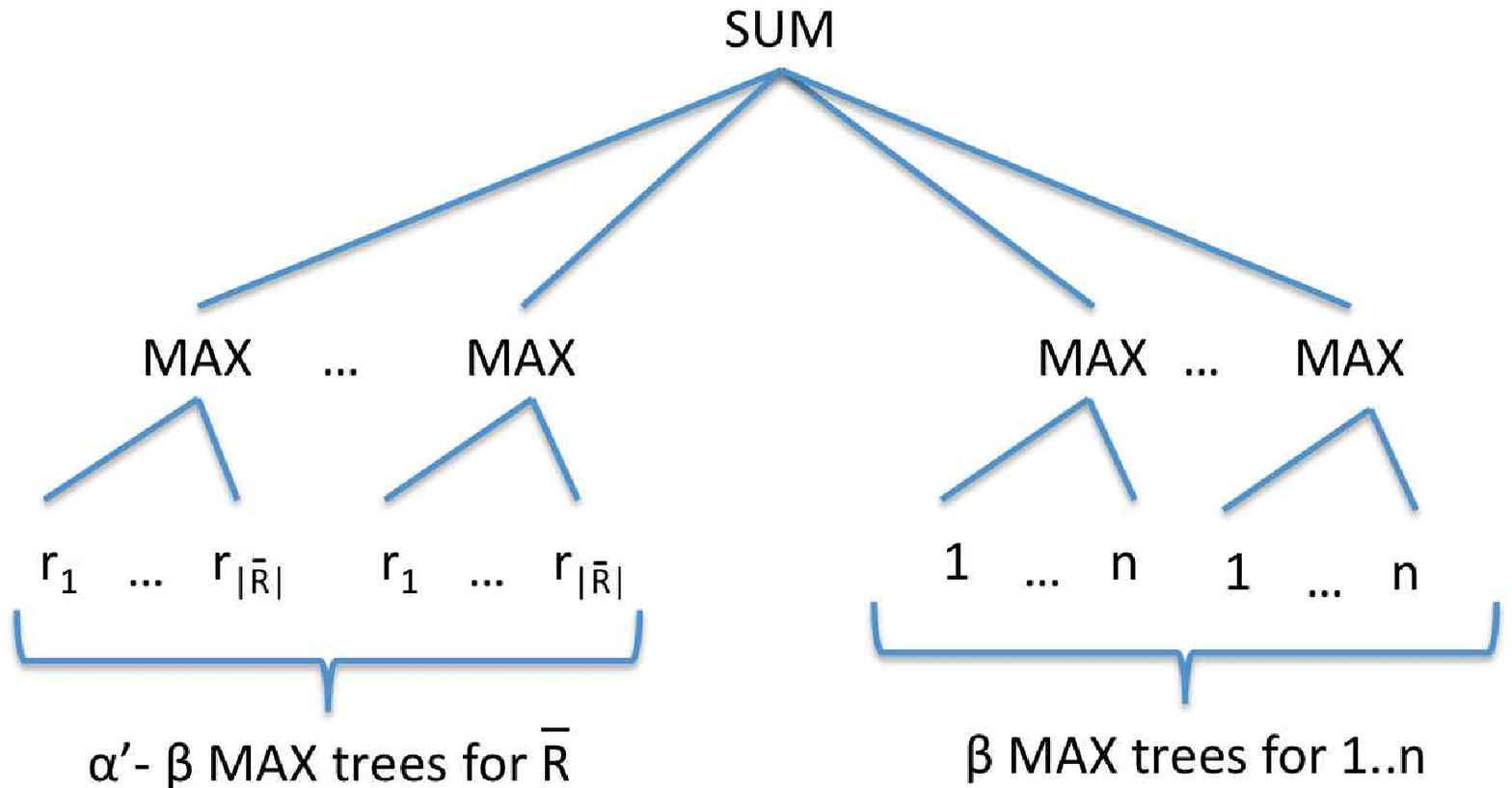}
 }
 \caption{OXS representations.
 All leaves have value $1$.}
\end{figure}%

Then $T(S) = f(S), \forall S \subseteq \{1,\dots,n\}$ since $f(S)$
represents the smaller of the number of elements in $S \cap R'$ (that
can each be taken from a different \XOR{} tree in $T$) and $c$. Note that
$T(S) \leq c, \forall S \subseteq \{1,\dots,n\}$.
\end{proof}
\begin{Lemma}
The matroid rank function $f_{R, \alpha', \beta}(\cdot) $ is in the class \OXS.
\label{stmt:GoemansOXS}
\end{Lemma}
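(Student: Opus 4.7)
The plan is to give an explicit OXS representation of $f_{R,\alpha',\beta}$ in the form suggested by Figure~\ref{fig:oneBumpGoemansOXS}, and to verify it by computing the OXS value at an arbitrary $S$ as the size of a maximum matching. Concretely, assume first that $n > \alpha' > \beta$ (boundary cases are handled separately below) and build an OXS tree with $\alpha'$ MAX subtrees, of two kinds: $\beta$ \emph{universal} subtrees, each having as leaves all items in $[n]$ with weight $1$, and $\alpha' - \beta$ \emph{restricted} subtrees, each having as leaves only the items of $\bar R$ with weight $1$. Call the resulting OXS function $g$.

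To verify $g = f_{R,\alpha',\beta}$, fix $S \subseteq [n]$ and set $k = |S \cap R|$, $l = |S \cap \bar R|$. Because every leaf weight is $0$ or $1$, each unit-demand subtree contributes $0$ or $1$ to a partition sum, and so $g(S)$ equals the maximum number of subtrees that can be assigned a distinct item they actually contain---i.e.\ the size of a maximum matching in the bipartite graph where items in $\bar R$ are adjacent to all $\alpha'$ subtrees and items in $R$ are adjacent only to the $\beta$ universal subtrees. This is a transversal-matroid rank (as noted in the paper's footnote on \OXS\ with $0/1$ weights), and an elementary case analysis on whether the restricted or universal side is saturated shows that the maximum matching equals $\min(|S|,\; \beta + l,\; \alpha')$: assign $a = \min(l, \alpha' - \beta)$ items of $\bar R$ to restricted subtrees and the remaining $k + l - a$ items to the $\beta$ universal subtrees, yielding total $a + \min(k + l - a,\, \beta)$, which in each sub-case matches one of the three arguments of the $\min$. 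This is precisely $f_{R,\alpha',\beta}(S)$.

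The main (and essentially only) obstacle is this matching computation, so the proof will write it out cleanly in two sub-cases: (i) $l \leq \alpha' - \beta$, where all $\bar R$-items go to restricted subtrees and then $\min(k,\beta)$ of the $R$-items fill universal subtrees, giving $g(S) = l + \min(k,\beta) = \min(|S|, \beta + l)$; and (ii) $l > \alpha' - \beta$, where the $\alpha' - \beta$ restricted subtrees are saturated by $\bar R$-items and the universal subtrees receive $\min(\beta, k + l - (\alpha' - \beta))$ items, giving $g(S) = \min(\alpha', |S|)$. In both sub-cases one checks directly that the missing term of the $\min$ defining $f_{R,\alpha',\beta}$ is dominated, so equality holds.

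Finally, the boundary cases. If $\beta \geq \alpha'$ then $\beta + |S \cap \bar R| \geq \alpha'$ always, so $f_{R,\alpha',\beta}(S) = \min(|S|, \alpha')$ and Lemma~\ref{stmt:addBudgetOXS} with $R' = [n]$ and $c = \alpha'$ applies directly. If $\alpha' \geq n$ then $\alpha' \geq |S|$ always, so $f_{R,\alpha',\beta}(S) = \min(\beta + |S \cap \bar R|, |S|)$; we simply use $\beta$ universal subtrees and $n - \beta$ restricted subtrees and re-run the matching argument, which now yields $\min(|S|, \beta + l)$. Combining the generic and boundary constructions shows $f_{R,\alpha',\beta} \in \OXS$ in all cases.
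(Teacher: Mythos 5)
Your proposal is correct, and the OXS construction you use is identical to the paper's: $\beta$ ``universal'' MAX-subtrees over all of $[n]$ and $\alpha'-\beta$ ``restricted'' MAX-subtrees over $\bar R$, with the same two boundary cases ($\beta\geq\alpha'$ via Lemma~\ref{stmt:addBudgetOXS}, and $\alpha'\geq n$ via a reduced tree). Where you diverge is in the verification. The paper argues by a four-way case split on which of the three arguments $\beta+|S\cap\bar R|$, $|S|$, $\alpha'$ attains the minimum, proving $T(S)\geq f(S)$ in each case and then pairing it with the blanket bound $T(S)\leq\min(|S|,\alpha')$ and a separate upper bound in the last case. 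You instead observe that, since all weights are $0/1$, the OXS value is exactly the maximum-matching size in the bipartite ``items vs.\ subtrees'' graph (equivalently, a transversal-matroid rank), and you compute that number directly as $a+\min(k+l-a,\beta)$ with $a=\min(l,\alpha'-\beta)$, splitting only on whether $l\leq\alpha'-\beta$. This is a cleaner and somewhat more conceptual route: it makes the identity $\min(|S|,\beta+l,\alpha')$ fall out from one greedy matching computation rather than from a scattered case analysis, and it explicitly foregrounds the transversal-matroid viewpoint that the paper only hints at in a footnote. The one spot you should tighten is the $\alpha'\geq n$ boundary case, where ``$n-\beta$ restricted subtrees'' silently assumes $\beta\leq n$; when $\beta\geq n$ the whole function reduces to $|S|$ (a linear, hence OXS, valuation), so a one-line remark closes that gap.
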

\begin{proof}
For ease of notation let $f(\cdot) = f_{R, \alpha', \beta}(\cdot)$.
If $n \leq \alpha'$ then\footnote{We note that $n >\alpha'$ in~\cite{nick09}.  We consider this case for completeness.} $|S| \leq \alpha', \forall S \subseteq \{1,\dots,n\}$ and
\begin{align}
f(S) = \min(\beta + |S \cap \bar{R}|, |S|) =  |S \cap \bar{R}| + \min( \beta, |S \cap R| )
\label{eq:simplerRIfNLeqAlpha}
\end{align}
From  the proof of Lemma~\ref{stmt:addBudgetOXS} we get that
the function
$f'(S) = \min( \beta, |S \cap R| )$ has an \OXS\ tree $T'$ (i.e. $T'(S) = f'(S), \forall S \subseteq \{1,\dots,n\}$)
with $\beta$ \XOR{} trees each with leaves only in $R$. We can create a new tree $T$
by  adding $|\bar{R}|$ \XOR{} trees to $T'$, each with one leaf
for every element in $\bar{R}$, and we get $T(S) = f(S), \forall S \subseteq \{1,\dots,n\}$.
The additional $|\bar{R}|$ \XOR{} trees encode the $ |S \cap \bar{R}| $ term in Eq.~\eqref{eq:simplerRIfNLeqAlpha}.
If $\alpha' \leq \beta$ then $f(S) = \min(\alpha', |S|)$; the claim follows by Lemma~\ref{stmt:addBudgetOXS} for $c = \alpha', R' = \{1,\dots,n\}$.
We can thus assume that $n > \alpha' > \beta$.
We prove that the \OXS\ tree $T$, containing the two types of \XOR{} trees below, represents $f$, i.e. $T(S)= f(S), \forall S$. We refer the reader to Fig.~\ref{fig:oneBumpGoemansOXS}.
\begin{itemize}
\item $\alpha' - \beta$ \XOR{} trees $T_{1}\dots T_{\alpha' - \beta}$, each having as leaves all the elements in $\bar{R}$ with value $1$.
\item $\beta$ \XOR{} trees $T_{\alpha' - \beta+1} \dots T_{\alpha'}$, each having as leaves all the elements (in $\{1,\dots,n\}$) with value $1$.
\end{itemize}
We note that $T(S) \leq \min(|S|, \alpha')$ as no set $S$ can use more than $|S|$ leaves and $T$ has exactly $\alpha'$ trees.
We distinguish the following cases
\begin{itemize}
\item $|S| \leq \beta$ implying $f(S) = |S|$ and $T(S) = |S|$ as $|S|$ leaves can be taken each from $|S|$ trees in $T_{\alpha' - \beta+1} \dots T_{\alpha'}$.
\item  $f(S) = \alpha' \leq \min(\beta+|S \cap \bar{R}|,|S|)$.  We
claim $T(S) \geq \alpha'$. There must exist $\alpha' - \beta$ elements
in $|S \cap \bar{R}|$, that we can select one from each tree
$T_{1}\dots T_{\alpha' - \beta}$. Also $|S| \geq \alpha'$ and we can
take the remaining $\beta$ elements from $T_{\alpha' - \beta+1} \dots
T_{\alpha'}$.
\item $f(S) = |S| \leq \min(\beta + |S \cap \bar{R}|, \alpha')$. 
This implies $|S \cap R| \leq \beta$ and $|S|\leq \alpha'$.
We claim $T(S) \geq |S|$: we can take all needed elements in $S \cap \bar{R}$ from $T_{1}\dots T_{\alpha' - \beta}$ (and from $T_{\alpha' - \beta + 1} \dots T_{\alpha'}$ if $|S\cap \bar{R}| > \alpha' - \beta$) and elements in $S \cap R$ from $T_{\alpha' - \beta + 1} \dots T_{\alpha'}$.

\item $f(S) = \beta + |S \cap \bar{R}| \leq \min(|S|, \alpha')$.
We claim $T(S) \geq \beta + |S \cap \bar{R}|$: we can take
$ \beta \leq |S \cap R|$  elements from
$T_{\alpha' - \beta+1} \dots T_{\alpha'}$
and $|S \cap \bar{R}| \leq  \alpha' - \beta$  elements from $T_{1}\dots T_{\alpha' - \beta}$.
Finally, $T(S) \leq \beta + |S \cap \bar{R}|$ since at most
all elements in $S \cap \bar{R}$ can be taken from $T_{1}\dots T_{\alpha' - \beta}$ and at most
$\beta$ elements in $S \cap R$  from $T_{\alpha' - \beta+1} \dots T_{\alpha'}$.
\end{itemize}
\end{proof}
\end{document}